\DeclareMathOperator*{\nn}{\nonumber}
\newcommand{\RNum}[1]{\uppercase\expandafter{\romannumeral #1\relax}}
\newtheorem{lemma}{Lemma}
\newtheorem{theorem}{Theorem}
\theoremstyle{definition}
\newtheorem{remark}{Remark}
\def\blfootnote{\gdef\@thefnmark{}\@footnotetext}
\def\cX{{\mathcal X}}
\def\cY{{\mathcal Y}}
\def\cS{{\mathcal S}}
\def\cQ{{\mathcal Q}}
\newcommand{\pr}[1]{\left(#1\right)}
\DeclarePairedDelimiterX{\infdivx}[2]{(}{)}{%
  #1\;\delimsize\|\;#2%
}
\newcommand{\infdiv}{D\infdivx}
\title{The Duality Upper Bound for Finite-State Channels with Feedback}
\author{
\IEEEauthorblockN{Bashar Huleihel} \IEEEauthorblockA{
basharh@post.bgu.ac.il }
\and
\IEEEauthorblockN{Oron Sabag}
\IEEEauthorblockA{
oron.sabag@mail.huji.ac.il}
\and
\IEEEauthorblockN{Ziv Aharoni}
\IEEEauthorblockA{
zivah@post.bgu.ac.il}
\and
\IEEEauthorblockN{Haim H. Permuter} \IEEEauthorblockA{
haimp@bgu.ac.il}
}
\begin{document}
\maketitle
\begin{abstract}
This paper investigates the capacity of finite-state channels (FSCs) with feedback. We derive an upper bound on the feedback capacity of FSCs by extending the duality upper bound method from mutual information to the case of directed information. The upper bound is expressed as a multi-letter expression that depends on a test distribution on the sequence of channel outputs. For any FSC, we show that if the test distribution is structured on a $Q$-graph, the upper bound can be formulated as a Markov decision process (MDP) whose state being a belief on the channel state. In the case of FSCs and states that are either unifilar or have a finite memory, the MDP state simplifies to take values in a finite set. Consequently, the MDP consists of a finite number of states, actions, and disturbances. This finite nature of the MDP is of significant importance, as it ensures that dynamic programming algorithms can solve the associated Bellman equation to establish analytical upper bounds, even for channels with large alphabets. We demonstrate the simplicity of computing bounds by establishing the capacity of a broad family of Noisy Output is the State (NOST) channels as a simple closed-form analytical expression. Furthermore, we introduce novel, nearly optimal analytical upper bounds on the capacity of the Noisy Ising channel.
\end{abstract}

\blfootnote{This paper was presented in part at the International Zurich Seminar on Information and Communication (IZS) \cite{Sabag_DB_FB}.}

\section{Introduction}\label{sec:intro}
Finite-state channels (FSCs) \cite{McMillan1953TheBT,Shannon_FSC,Blackwell58} are commonly used to model scenarios in which the channel or the system have memory, as encountered in many practical applications such as wireless communication \cite{FSC_Wirless1,FSC_Wirless2,Wang95_FSC_usful_for_radiochannels}, magnetic recording \cite{FSC_Magnetic}, and molecular communication \cite{MolecFSCTransComm,MolecularSurvey}. Despite their significance in theory and practice, both the capacity and the feedback capacity of FSCs are characterized by multi-letter formulas that are hard to compute. This paper investigates the capacity of FSCs with feedback (see Fig. \ref{fig:FSC}), and develops a methodology to derive computable upper bounds on their feedback capacity.

A common approach for computing the feedback capacity of FSCs is via the Markov decision process (MDP) formulation of the capacity expression \cite{Permuter06_trapdoor_submit,Yang05,TatikondaMitter_IT09}. In certain cases, employing dynamic programming (DP) methods or reinforcement learning (RL) algorithms yields a numerical solution that can be translated into a conjectured optimal solution. To conclude its optimality and derive a corresponding analytical capacity expression, a solution for the involved Bellman equation is required, as has been done for several particular examples \cite{Chen05,PermuterCuffVanRoyWeissman08,Ising_channel,Sabag_BEC,trapdoor_generalized,Ising_artyom_IT,Sabag_BIBO_IT}. However, due to the inherent high complexity associated with the continuous states and actions of the involved MDP, obtaining such an analytical solution is generally infeasible for the majority of channels, particularly when the channel alphabets extend beyond binary.

In a recent paper \cite{Sabag_UB_IT}, the authors introduced an upper bound on the feedback capacity of unifilar FSCs. The upper bound is expressed as a single-letter formula and holds for any choice of a \textit{$Q$-graph}\footnote{The $Q$-graph, introduced in \cite{Sabag_UB_IT}, is an auxiliary directed graph that is used to map output sequences onto one of the auxiliary graph nodes (For instance, see Fig. \ref{fig:1Markov}).}.
Furthermore, it was demonstrated that the bound can be transformed into a conventional convex optimization problem \cite{OronBasharfeedback}. Although this bound has led to new capacity results, its analytical computation remains challenging due to the requisite verification of the Karush–Kuhn–Tucker (KKT) conditions. This complexity becomes particularly pronounced when channel parameters involve large alphabets. The recent development of RL algorithms for (numerical) computation and conjecturing of optimal solutions for feedback capacity in scenarios involving large alphabets \cite{aharoni2022feedback} has motivated the current paper. The approach proposed in this paper is applicable to any FSC and offers an alternative, significantly simpler method for deriving analytical upper bounds. 
 
Motivated by the above challenges, the current paper focuses on simple derivation of analytical upper bounds that are suitable to channels with large alphabets. The derivation is based on the dual upper bounding technique \cite{Andrew_ISI,MIMO_dual}, adapted to directed information. The resulting duality bound is a multi-letter formula that is a function of a test distribution on the channel outputs that needs to be optimized. Our main contribution is that if the test distribution is structured on a Q-graph, the upper bound can be formulated as an MDP. The MDP formulation holds for any FSC and a test distribution that is structured on a $Q$-graph. Further, for both unifilar FSCs and finite-memory state channels, the MDP formulation consists of a finite number of states, actions, and disturbances. As a result, simple numerical and analytical MDP tools can be applied to solve the associated MDP, owing to the finite alphabets in the formulation.

\begin{figure}[t]
\centering
    \psfrag{E}[][][.95]{Encoder}
    \psfrag{D}[][][.95]{Decoder}
    \psfrag{C}[][][0.9]{$P_{S^+,Y|X,S}(s_t,y_t|x_t,s_{t-1})$}
    \psfrag{V}[][][.78]{Unit-Delay}
    \psfrag{M}[][][1]{$m$}
    \psfrag{Y}[][][1]{$y_t$}
    \psfrag{O}[][][1]{$\hat{m}$}
    \psfrag{Z}[][][1]{$y_{t-1}$}
    \psfrag{X}[][][1]{$x_t$}
    \includegraphics[scale = 0.7]{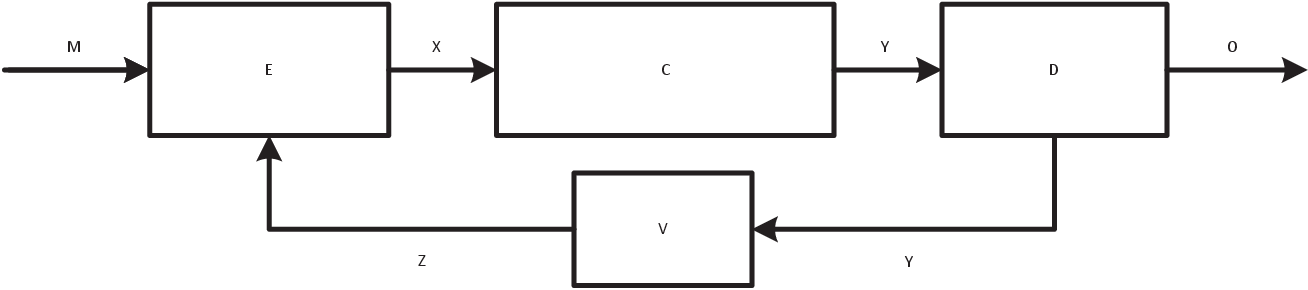}
    \caption{Finite-state channel with feedback.}
    \label{fig:FSC}
\end{figure}
The choice of the test distribution is crucial as it directly affects the performance of the upper bound. As in the case of a memoryless channel, if the test distribution matches the optimal output distribution, the resulting upper bound is tight. In most cases, the optimal output distribution for channels with memory is not i.i.d.. While Markov test distributions are commonly chosen \cite{Dual_Andrew_J,Duality_Kremer,Vontobel_Dual}, it has been demonstrated that certain channels do not admit a Markovian structure of any finite order. To address this limitation, we adopt \textit{graph-based test distributions} introduced in \cite{Huleihel_Sabag_DB}. These test distributions are structured on $Q$-graphs and provide a generalization of the standard Markov test distributions, allowing for a variable-order Markov process on the channel output sequence.

To illustrate the process of deriving analytical upper bounds using our proposed methodology, we study the feedback capacity of several FSCs, including the Noisy Output is the STate (NOST) channels \cite{shemuel2022feedback} and the Noisy Ising (N-Ising) channels, a generalized version of the well-known Ising channel \cite{Berger90IsingChannel}. In particular, we analyze a broad family of NOST channels, establishing their capacity as a closed-form analytical expression. Furthermore, we establish new upper bounds on the capacity of the N-Ising channel and demonstrate their near-tightness. For all these examples, we derive the upper bounds by selecting a specific graph-based test distribution and subsequently solving the associated MDP problem.

The remainder of the paper is organized as follows. Section \ref{sec:prelimi} provides our notations and the model definition. Section \ref{sec: preliminaries} introduces the dual capacity upper bound and provides background on $Q$-graphs. Section \ref{sec: main_results} outlines our main results. Section \ref{sec: DP_Main} provides a concise overview of infinite-horizon DP and introduces our MDP formulation of the dual capacity upper bound for FSCs with feedback. Section \ref{sec: Q_graph_exploration} presents several techniques to explore for $Q$-graphs that yield good performance upper bounds. Section \ref{sec:examples} presents our analytic results on the capacity of several FSCs. Finally, our conclusion appears in Section \ref{sec:conclusion}. To maintain the flow of the presentation, some proofs are given in the appendices.

\section{Notation and Model Definition}\label{sec:prelimi}
In this section, we introduce our notation and define the FSC model.

\subsection{Notation}
Throughout this paper, we use the following notations. The set of natural numbers, excluding zero, is denoted by $\mathbb{N}$, and $\mathbb{R}$ denotes the set of real numbers. Random variables will be denoted by capital letters, and their realizations will be denoted by lower-case letters, e.g., $X$ and $x$, respectively. Calligraphic letters denote sets, e.g.,  $\mathcal{X}$. We use the notation $X^n$ to denote the random vector $(X_1,X_2,\dots,X_n)$ and $x^n$ to denote the realization of such a random vector. For a real number $\alpha\in[0,1]$, we define $\bar{\alpha}=1-\alpha$. The binary entropy function is defined by $H_2(\alpha) = -\alpha\log_2(\alpha)-\bar{\alpha}\log_2(\bar{\alpha})$ with the convention of $0\log_2 0=0$. The probability mass function of $X$ is denoted by $P_X$, the conditional probability of $X$ given $Y$ is denoted by $P_{X|Y}$, and the joint distribution of $X$ and $Y$ is denoted by $P_{X,Y}$. 
The probability $\Pr[X=x]$ is denoted by $P_X(x)$. When the random variable is clear from the context, we write it in shorthand as $P(x)$. 

Let $P_Y$ and $T_Y$ be two discrete probability measures on the same probability space. The relative entropy between $P_Y$ and $T_Y$ is denoted by $D\left(P_Y\|T_Y\right)$. The conditional relative entropy is defined as $D(P_{Y|X}\|T_Y|P_X) = \mathbb{E}_X \left\{D(P_{Y|X}\|T_Y)\right\}$, where $\mathbb{E}_X[\cdot]$ denotes the expectation operator over $P_X$. We use the standard notation of directed information, as in \cite{Kramer98},
\begin{align}
    I(X^n\rightarrow Y^n) = \sum_{i=1}^n I(X^i;Y_i|Y^{i-1}),\nn
\end{align}
and causal conditioning
\begin{align*}
    P(y^n\|x^n) = \prod_{i=1}^n P(y_i|y^{i-1},x^i).
\end{align*}
When referring to causal conditioning particularized for deterministic functions, we employ the notation $f(x^n\|y^n)$, defined as
\begin{align}\label{eq:det_causal}
    f(x^n\|y^{n-1}) = \prod_{i=1}^n \mathbbm{1}\{x_i=f_i(x^{i-1},y^{i-1})\},
\end{align}
where $f_i:\mathcal{X}^{i-1}\times\mathcal{Y}^{i-1}\rightarrow \mathcal{X}$ are deterministic functions. Finally, the conditional causal conditioning is defined as
\begin{align*}
    P(y^n\|x^n|z) &= \prod_{i=1}^n P(y_i|y^{i-1},x^i,z).
\end{align*}

\subsection{Finite-state channels}
A FSC is defined by the triplet ($\cX\times\cS$, $P_{S^+,Y|X,S}$, $\cY\times\cS$), where $X$ is the channel input, $Y$ is the channel output, $S$ is the channel state at the beginning of the transmission, and $S^{+}$ is the channel state at the end of the transmission. The cardinalities $\cX$, $\cY$, and $\cS$ are assumed to be finite. The channel has the following probabilistic property
\begin{align}\label{eq:FSC}
    P(s_t,y_t|x^t,y^{t-1},s^{t-1},m) = P_{S^+,Y|X,S}(s_t,y_t|x_t,s_{t-1}),\;\;t=1,2,\dots,
\end{align}
for a given message $m$.

In this paper, we consider a communication setting with feedback as depicted in Fig. \ref{fig:FSC}. It is assumed that the initial state, $s_0$, is available to both the encoder and the decoder. The encoder has access to the message $M$, and the previous channel outputs. Accordingly, the encoder outputs $x_t$ as a function of $M$ and the channel outputs up to time $t-1$. The channel input $x_t$ then goes through a FSC and the resulting output
$y_t$ enters the decoder.

A FSC is \textit{strongly connected} if, for any states $s,s^{\prime}\in\mathcal{S}$, there exits an integer $T$ and an input distribution $\{P_{X_t|S_{t-1}}\}_{t=1}^{T}$ such that $\sum_{t=1}^T P_{S_t|S_0}(s|s^{\prime})>0$. We are using the definition of achievable rate and capacity as given in the book by Cover and Thomas \cite{cover}. The feedback capacity of a strongly connected FSC is stated in the following theorem.
\begin{theorem}[\!\cite{Kim08_feedback_directed}, Th. 1] \label{FSC_feedback_Capacity}
 The feedback capacity of a strongly connected FSC is
\begin{align*}
	\mathsf{C_{fb}} = \lim_{n\to\infty}\frac{1}{n}\max_{P(x^n\|y^{n-1})}I(X^n\rightarrow Y^n),
\end{align*}
for any initial state.
\end{theorem}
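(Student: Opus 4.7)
The plan is to prove the two matching inequalities $\mathsf{C_{fb}} \le \mathsf{C}^{\star}$ and $\mathsf{C_{fb}} \ge \mathsf{C}^{\star}$, where $\mathsf{C}^{\star} := \lim_{n\to\infty}\frac{1}{n}\max_{P(x^n\|y^{n-1})}I(X^n\to Y^n)$, and separately to show that the limit exists and is independent of the initial state. The existence of the limit follows from superadditivity of the sequence $C_n := \max_{P(x^n\|y^{n-1})}I(X^n\to Y^n|s_0)$: given optimal distributions for block lengths $n_1$ and $n_2$, one concatenates them and absorbs, via strong connectivity, a vanishing number of ``state-reset'' symbols to bridge the initial-state mismatch; Fekete's lemma then yields $\mathsf{C}^{\star}=\sup_n C_n/n$. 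Strong connectivity is used once more to show the value is independent of $s_0$, since the first state can be driven to any target in bounded expected time.

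For the converse, begin with a sequence of $(2^{nR},n)$ feedback codes with vanishing error probability. By Fano's inequality, $nR\le I(M;Y^n|s_0)+n\varepsilon_n$. Because feedback forces $X_t$ to be a deterministic function of $(M,Y^{t-1})$, one expands
\begin{align*}
I(M;Y^n|s_0) &= \sum_{t=1}^n I(M;Y_t|Y^{t-1},s_0) \\
&\le \sum_{t=1}^n I(M,X^t;Y_t|Y^{t-1},s_0) \\
&= \sum_{t=1}^n I(X^t;Y_t|Y^{t-1},s_0) = I(X^n\to Y^n|s_0),
\end{align*}
where the last equality uses the Markov chain $M\to (X^t,Y^{t-1},s_0)\to Y_t$, which follows from the FSC law \eqref{eq:FSC} (the state $S_{t-1}$ is a deterministic or randomized function of $(X^{t-1},Y^{t-1},s_0)$ that is independent of $M$ given these). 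Maximizing over feedback-compatible $P(x^n\|y^{n-1})$, dividing by $n$, and taking $n\to\infty$ yields $R\le\mathsf{C}^{\star}$.

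The achievability is the harder direction and is where strong connectivity plays its essential role. The plan is to fix a large $n$ with $\tfrac{1}{n}\max I(X^n\to Y^n)\ge \mathsf{C}^{\star}-\delta$, to pick a maximizing $P^{\star}(x^n\|y^{n-1})$, and then to generate $2^{nR}$ independent length-$n$ codewords by drawing each causally from $P^{\star}$; decoding uses a threshold rule on the directed-information density $\tfrac{1}{n}\log\frac{P(y^n\|x^n)}{P^{\star}(y^n)}$. A standard change-of-measure and union bound bounds the error probability in terms of a lower tail of the normalized information density. The main obstacle, and the step that invokes strong connectivity most heavily, is establishing \emph{information stability}: one must show the information density concentrates around $\tfrac{1}{n}I(X^n\to Y^n)$ despite long-range state memory. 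This is achieved by using strong connectivity to bound the mixing of the joint state-output process, thereby applying a law-of-large-numbers (or Verdu--Han inf-information-rate) argument even without a Markov structure on the input. Combining super-block coding (to drive $n$ to infinity) with this concentration bound closes the gap and gives $\mathsf{C_{fb}}\ge \mathsf{C}^{\star}-\delta$ for every $\delta>0$, completing the proof.
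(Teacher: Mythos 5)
The paper does not prove Theorem~\ref{FSC_feedback_Capacity}; it is quoted verbatim from \cite{Kim08_feedback_directed} (Theorem~1 there), so there is no internal proof to compare your attempt against. Judged on its own, your converse is essentially the standard argument and is correct: the key Markov chain $M-(X^t,Y^{t-1},S_0)-Y_t$ does hold, because in $P(s_{t-1}\mid x^{t-1},y^{t-1},s_0,m)$ the causal-conditioning factors $\prod_i P(x_i\mid m,y^{i-1})$ cancel between numerator and denominator, leaving a quantity that depends only on $(x^{t-1},y^{t-1},s_0)$ (it is not a deterministic function of these, as you loosely say, but a conditional law independent of $m$ given them). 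The existence-of-limit and initial-state-independence claims via superadditivity, strong connectivity, and Fekete's lemma are the right mechanism and match the standard treatment.

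The achievability sketch, however, contains a real gap. You assert that strong connectivity ``bound[s] the mixing of the joint state-output process'' and that this yields information stability ``even without a Markov structure on the input.'' As stated this does not go through: for a general causally conditioned $P(x^n\|y^{n-1})$, the joint process $(X_t,Y_t,S_t)$ is not a Markov chain of any finite order, so mixing of that process is not a property you can invoke from strong connectivity of the channel alone. The viable route --- which you gesture at with ``super-block coding'' but do not actually develop --- is to freeze a block length $n$ and an optimizer $P^\star(x^n\|y^{n-1})$, use strong connectivity only to re-drive the channel state to a designated value at the start of each length-$n$ block (costing an asymptotically vanishing fraction of the blocklength), and then argue that the per-block directed-information densities are (approximately) i.i.d., so a law of large numbers closes the gap. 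Even then, the re-synchronization is only probabilistic, so the residual inter-block dependence must be quantified; this is precisely the delicate content of the cited coding theorem and cannot be deferred to a generic Verd\'u--Han appeal. Until that step is made precise, the achievability direction is not established.
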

As can be seen above, the feedback capacity is expressed by a multi-letter formula. In the sequel, we present upper bounds for general FSCs, but these bounds are significantly simplified for two important classes of channels: 
\begin{itemize}
    \item Unifilar FSCs: for these channels, the channel state evolves according to a deterministic function of the channel input and output, and the previous channel state. That is, \eqref{eq:FSC} is simplified to:
    \begin{align}
        P(s_t,y_t|x_t,s_{t-1}) = \mathbbm{1}\{s_t=f(x_t,y_t,s_{t-1})\}P_{Y|X,S}(y_t|x_t,s_{t-1}),
    \end{align}
    where $f:\mathcal X\times \mathcal{Y}\times\mathcal{S}\to \mathcal S$.
    \item Finite-memory state channels: for these channels, the channel state depends on a finite number of past inputs and outputs. Specifically, 
       \begin{align}\label{eq:input_def}
        P(s_t,y_t|x^t,y^{t-1},s^{t-1}) = P(s_t|x_{t-k_1}^t,y_{t-k_2}^t)P_{Y|X,S}(y_t|x_t,s_{t-1}),
    \end{align}
    where $k_1$, $k_2$ are arbitrary non-negative finite integers.
\end{itemize}
In Section  \ref{subsection: FC_extention}, we show that the capacity of a finite-memory state channel can be computed as that of a new unifilar FSC, which is derived by reformulating the original channel. Specifically, we define the new channel state as $\tilde{S}_{t}\triangleq (X^t_{t-k_1},Y^t_{t-k_2})$, while the channel input and the channel output remain the same.

\section{Preliminaries} \label{sec: preliminaries}
In this section, we first introduce the dual capacity upper bound for memoryless channels. We then extend the technique to derive the duality bound for the case of directed information. Finally, we present an auxiliary tool called the $Q$-graph that will be used to produce test distributions.
\subsection{Dual capacity upper bound}
The dual capacity upper bound was first introduced in \cite{Topsoe67,Dual_capacity}. For a memoryless channel, the bound is given by the following theorem.
\begin{theorem}[\cite{Dual_capacity}, Th. $8.4$]
For any test distribution $T_Y$ and a memoryless channel $P_{Y|X}$, the capacity is upper bounded by
\begin{align}
    \mathsf{C} \le \max_{x\in\mathcal{X}} D(P_{Y|X=x}\|T_Y).
\end{align}
\end{theorem}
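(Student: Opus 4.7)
The plan is to start from the standard capacity formula $\mathsf{C} = \max_{P_X} I(X;Y)$ for memoryless channels and rewrite the mutual information in a form that involves the test distribution $T_Y$ explicitly. The key algebraic identity I will establish is
\begin{align*}
    I(X;Y) = \mathbb{E}_{P_X}\!\left[D(P_{Y|X}\|T_Y)\right] - D(P_Y\|T_Y),
\end{align*}
where $P_Y$ denotes the output distribution induced by $P_X$ through $P_{Y|X}$. This is a one-line manipulation: in $D(P_{Y|X=x}\|P_Y) = \sum_y P_{Y|X}(y|x)\log\frac{P_{Y|X}(y|x)}{P_Y(y)}$, multiply and divide by $T_Y(y)$ inside the logarithm and average over $P_X$; the cross term collapses because $\sum_x P_X(x)P_{Y|X}(y|x) = P_Y(y)$, giving exactly the identity above.

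Once this decomposition is in place, non-negativity of relative entropy, $D(P_Y\|T_Y)\ge 0$, immediately yields
\begin{align*}
    I(X;Y) \le \mathbb{E}_{P_X}\!\left[D(P_{Y|X}\|T_Y)\right] = \sum_{x\in\mathcal{X}} P_X(x)\, D(P_{Y|X=x}\|T_Y).
\end{align*}
The right-hand side is an average of the quantities $D(P_{Y|X=x}\|T_Y)$ with respect to $P_X$, so it is upper bounded by the maximum over $x\in\mathcal{X}$. Crucially, this upper bound is free of $P_X$, so taking the maximum over $P_X$ on the left yields
\begin{align*}
    \mathsf{C} = \max_{P_X} I(X;Y) \le \max_{x\in\mathcal{X}} D(P_{Y|X=x}\|T_Y),
\end{align*}
which is precisely the claim. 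Note that the bound holds for any fixed $T_Y$, and would be tight if $T_Y$ were chosen equal to the capacity-achieving output distribution $P_Y^\ast$ (on its support) and the KKT conditions hold with equality on the support of the optimal $P_X$.

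There is no real obstacle here: the entire argument hinges on the change-of-measure identity for relative entropy and the non-negativity of $D(P_Y\|T_Y)$. The only mild care needed is the standard convention that $P_{Y|X=x}$ be absolutely continuous with respect to $T_Y$ for every $x\in\mathcal{X}$ (otherwise the bound is trivially $+\infty$ and holds vacuously), which is implicit in using a meaningful test distribution. This proof structure is also the natural starting point for the directed-information generalization developed later in the paper, since the decomposition $I = \mathbb{E}[D(\cdot\|T)] - D(\cdot\|T)$ extends verbatim when mutual information is replaced by directed information and $T_Y$ by a test distribution on the output sequence.
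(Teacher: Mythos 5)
Your proof is correct and matches exactly the approach the paper uses to prove its directed-information generalization (Theorem~\ref{Th: DB_DI}): multiply and divide by the test distribution inside the logarithm to obtain the decomposition $I = \mathbb{E}_{P_X}[D(P_{Y|X}\|T_Y)] - D(P_Y\|T_Y)$, drop the second term by non-negativity of relative entropy, and then bound the remaining expectation by the maximum over $x$. The paper cites this memoryless theorem from the literature without reproducing its proof, but your derivation is the standard one and is structurally identical to the paper's proof of the directed-information analog.
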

The bound is referred to as the \textit{duality upper bound} since it is derived from the dual capacity expression \cite{Csis62}, and it follows from the non-negativity of the KL divergence $D(P_Y ||T_Y)$. If the test distribution $T_Y$ is chosen to be the unique optimal output distribution $P_Y^*$, i.e., the output distribution induced by an optimal input distribution, the bound is tight. Therefore, to minimize the upper bound, one should carefully select $T_Y$ to be as close as possible to $P_Y^*$.

Following a similar methodology, we obtain the following result for the directed information.
\begin{theorem}[Duality UB for Directed Information]\label{Th: DB_DI}
For a fixed $P(x^n,y^n)$ and any test distribution $T(y^n)$,
\begin{align}
    I(X^n\rightarrow Y^n)\le\max D\left(P_{Y^n\|X^n=x^n}\|T_{Y^n}\right),
\end{align}
where the maximum is over $f(x^n\|y^{n-1})$ as defined in \eqref{eq:det_causal}.
\end{theorem}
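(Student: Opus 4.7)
The approach I would take is the natural extension of the memoryless duality argument, combined with a convexity reduction from stochastic to deterministic feedback strategies. Starting from the identity
\begin{align*}
I(X^n \to Y^n) = \mathbb{E}\left[\log \frac{P(Y^n\|X^n)}{P(Y^n)}\right],
\end{align*}
I would insert the test distribution by splitting the log-ratio as $\log \frac{P(Y^n\|X^n)}{T(Y^n)} - \log \frac{P(Y^n)}{T(Y^n)}$. Taking expectations, the second term equals $D(P_{Y^n}\|T_{Y^n}) \ge 0$, so it can be dropped to yield
\begin{align*}
I(X^n \to Y^n) \le \mathbb{E}_{P(x^n,y^n)}\left[\log \frac{P(Y^n\|X^n)}{T(Y^n)}\right].
\end{align*}

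The remaining task is to upper bound this expectation by a maximum over deterministic feedback strategies. The key observation is that the set of causal conditionings $\{P(x^n\|y^{n-1})\}$ is a convex polytope whose extreme points are exactly the deterministic causal functions $f(x^n\|y^{n-1})$ from \eqref{eq:det_causal}. Concretely, I would write each stochastic kernel $P(x_i|x^{i-1}, y^{i-1})$ as a convex combination of indicator kernels $\mathbbm{1}\{x_i = f_i(x^{i-1}, y^{i-1})\}$ and then multiply the resulting $n$ factors to express
\begin{align*}
P(x^n\|y^{n-1}) = \sum_f \lambda_f\, f(x^n\|y^{n-1}), \qquad \lambda_f \ge 0, \quad \sum_f \lambda_f = 1,
\end{align*}
where $f=(f_1,\dots,f_n)$ ranges over all deterministic causal feedback strategies.

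Using the factorization $P(x^n, y^n) = P(x^n\|y^{n-1}) P(y^n\|x^n)$ and substituting the decomposition above, the upper bound becomes $\sum_f \lambda_f D_f$, where
\begin{align*}
D_f = \sum_{x^n, y^n} f(x^n\|y^{n-1})\, P(y^n\|x^n)\, \log \frac{P(y^n\|x^n)}{T(y^n)}
\end{align*}
is precisely $D(P_{Y^n\|X^n=x^n}\|T_{Y^n})$ evaluated along the deterministic strategy $f$ (since $f$ selects a single $x^n$ for each observed $y^{n-1}$). Upper bounding a convex combination by its maximum then yields the theorem.

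The main obstacle I expect is verifying the convex decomposition cleanly, as one must carefully track how the product structure of $P(x^n\|y^{n-1})$ distributes over convex combinations taken at each time step and how the resulting local deterministic functions $f_i$ assemble into a global causal strategy $f$. Once this combinatorial point is settled, the rest of the proof reduces to the nonnegativity of relative entropy and the trivial bound of an average by its maximum.
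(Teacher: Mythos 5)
Your proposal is correct and follows essentially the same route as the paper: insert the test distribution $T(y^n)$ into the log-ratio, drop the nonnegative term $D(P_{Y^n}\|T_{Y^n})$, and bound the remaining average over the causal conditioning $P(x^n\|y^{n-1})$ by a maximum over deterministic causal strategies. The one place where you are more explicit than the paper is the final step: the paper simply asserts that the expectation over $P(x^n\|y^{n-1})$ is bounded by the maximum over deterministic feedback functions, while you spell out the underlying reason — that causal conditionings form a convex polytope whose extreme points are exactly the deterministic kernels $f(x^n\|y^{n-1})$, and the objective is linear in $P(x^n\|y^{n-1})$ — which is a useful clarification but not a genuinely different argument.
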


It is worth noting that the directed information serves as the capacity expression objective for both feedback and non-feedback settings. The only difference lies in the optimization domain, where we optimize over causally conditioned input distributions $P(x^n|y^n)$ in the case of feedback and $P(x^n)$ in the case of no feedback. The duality bound also holds for non-feedback settings, but the optimization is done over sequences $x^n$ rather than feedback-dependent sequences $f(x^n|y^{n-1})$. The following proof illustrates this fact.
\begin{proof}[Proof of Theorem \ref{Th: DB_DI}]
Consider the following chain of inequalities
\begin{align}\label{eq: DB_DI}
    I(X^n\rightarrow Y^n) &= \sum_{x^n,y^n} P(x^n,y^n)\log_2\left(\frac{P(y^n\|x^n)}{P(y^n)}\right)\nn\\
    &= \sum_{x^n,y^n} P(x^n,y^n)\log_2\left(\frac{P(y^n\|x^n)T(y^n)}{P(y^n)T(y^n)}\right)\nn\\
    &= \sum_{x^n,y^n} P(x^n,y^n)\log_2\left(\frac{P(y^n\|x^n)}{T(y^n)}\right)-D(P_{Y^n}\|T_{Y^n})\nn\\
    &\stackrel{(a)}\le \sum_{x^n,y^n}P(x^n\|y^{n-1})P(y^n\|x^n)\log_2\left(\frac{P(y^n\|x^n)}{T(y^n)}\right)\nn\\
    &= \mathbb{E}_{X^{n}\|Y^{n-1}}\left[D\left(P_{Y^n\|X^n}\|T_{Y^n}\right)\right]\nn\\
    &\stackrel{(c)}\le \max_{f(x^n\|y^{n-1})} D\left(P_{Y^n\|X^n=x^n}\|T_{Y^n}\right),
\end{align}
where step $(a)$ follows from the non-negativity of the relative entropy and the fact that $P(x^n,y^n)=P(x^n\|y^{n-1})P(y^n\|x^n)$, and the maximum in step $(b)$ is taken over deterministic functions $f_i:\mathcal{X}^{i-1}\times\mathcal{Y}^{i-1}\rightarrow \mathcal{X}$ for $i=1,\dots,n$, such that $f(x^n\|y^{n-1}) = \prod_{i=1}^n \mathbbm{1}\{x_i=f_i(x^{i-1},y^{i-1})\}$.
\end{proof}
Note that Theorem \ref{Th: DB_DI} holds for any channel. As clarified, for channels with memory, one should consider choosing test distribution with memory \cite{Huleihel_Sabag_DB, OronBasharfeedback}. Markov test distributions are a standard choice in the literature \cite{Dual_Andrew_J,Duality_Kremer,Vontobel_Dual}, but it has been shown that the optimal outputs distribution for certain channels does not admit a Markovian structure of any finite order \cite{Sabag_UB_IT}. In this paper, we choose a test distribution that extends the Markov model to a variable-order Markov model using $Q$-graph.

\subsection{The $Q$-graph}
The $Q$-graph is a directed and connected graph defined on a finite set of nodes $\cQ$, with edges labeled with symbols from the channel output alphabet $\mathcal{Y}$. It possesses the property that the outgoing edges from each node are labeled with distinct symbols from $\cY$ (see Fig. \ref{fig:1Markov}). Consequently, the $Q$-graph can be utilized as a mapping of (any length) output sequences to the graph nodes by traversing the labeled edges. Specifically, for an initial node $q_0 \in \cQ$, we define a distinct mapping denoted by $\Phi_{q_0}:{\cY}^{*}\to \cQ$, where $\cY^*$ encompasses all finite-length sequences over $\cY$. To elaborate, $\Phi_{q_0}(y^t)$ signifies the node reached by traversing along the unique directed path of length $t$ labeled by $y^t = (y_1,y_2,\ldots,y_t)$ while originating from node $q_0$. For notational convenience, we frequently omit the subscript from $\Phi_{q_0}$ when there is no ambiguity.
Alternatively, the induced mapping can be expressed through a time-invariant function $\phi:\cQ\times\cY\to \cQ$, where a new node is computed from the previous node and the channel output.

Following \cite{Huleihel_Sabag_DB}, we use here \textit{graph-based test distributions} which extend the standard Markov test distributions to variable-order Markov models. For a fixed $Q$-graph, a graph-based test distribution, $T_{Y|Q}$, is defined by a collection of probability distributions $T_{Y|Q=q}$ on $\cY$ for each node $q \in \cQ$. It satisfies the following equation:
\begin{align}\label{eq:test_dist}
	T_{Y^n|Q_0}(y^n|q_0) = \prod_{t=1}^n T_{Y|Q}(y_t|q_{t-1}),
\end{align}
where $q_{t-1}=\Phi(y^{t-1})$. 
\begin{remark}
Note that a Markov model of any order is a special case of the variable-order Markov model. To elaborate, when considering a Markov model of order $k$, we construct a graph with $\mathcal{Y}^k$ nodes, where each node represents a tuple of $k$ channel outputs, and the edges are connected accordingly. Fig.~\ref{fig:1Markov} depicts an example of a first-order Markov model with $\mathcal{Y} = \{ 0, 1 \}$ and $k=1$. 
\end{remark}
\begin{figure}[tb]
\centering
    \psfrag{Q}[][][0.9]{$\;Y=1$}
    \psfrag{E}[][][0.9]{$\;\;Y=0$}
    \psfrag{L}[][][0.9]{$Q=2$}
    \psfrag{H}[][][0.9]{$Q=1$}
    \includegraphics[scale = 0.5]{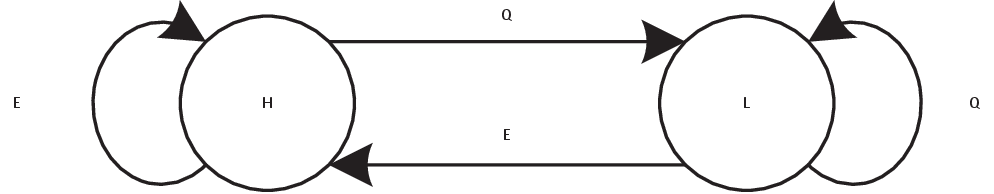}
    \caption{A $1$st-order Markov $Q$-graph for channel output alphabet $\mathcal{Y}=\{0,1\}$.}
    \label{fig:1Markov}
\end{figure}

\section{Main Results}\label{sec: main_results}
In this section, our main results are summarized. We begin by presenting the duality upper bound in Theorem \ref{Th: DB_DI} restricted to graph-based test distributions.
\begin{theorem}[Duality UB for FSCs using $Q$-graphs]\label{Th: DUB_Q}
For any graph-based test distribution $T_{Y|Q}$, the feedback capacity of a strongly connected FSC is bounded by
\begin{align}
    \mathsf{C_{fb}} \le \lim_{n\to\infty}\max_{f(x^n\|y^{n-1},s_0)}\min_{s_0,q_0}\frac{1}{n}\sum_{i=1}^n\mathbb{E}\left[D\left(\sum_{s_{i-1}}P(s_{i-1}|Y^{i-1},x^{i-1})P_{Y|X,S}(\cdot|x_i,s_{i-1})\Bigg{\|}T\left(\cdot|Q_{i-1}\right)\right)\right],\nn
\end{align}
where the joint distribution is defined by $P(x_i,y_i,s_i,q_i|x^{i-1},y^{i-1},s^{i-1},s_0,q_0)$ that factorizes as
\begin{align}
     \mathbbm{1}\{x_i=f_i(x^{i-1},y^{i-1},s_0)\}P_{Y|X,S}(y_i|x_i,s_{i-1})P_{S^+|X,Y,S}(s_i|x_i,y_i,s_{i-1})\mathbbm{1}\{q_i=\Phi_{q_0}(y^{i-1})\}.\nn
\end{align}
\end{theorem}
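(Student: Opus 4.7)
The plan is to derive the bound by applying Theorem~\ref{Th: DB_DI} with the initial state $s_0$ acting as common side information, then specializing the test distribution to a graph-based one and exploiting the FSC structure to produce a per-letter expression. First I would conditional on the known initial state $s_0$: since both encoder and decoder share $s_0$, Theorem~\ref{Th: DB_DI} applied under this side information yields, for any test distribution $T_{Y^n}$,
\begin{align*}
I(X^n\to Y^n\mid s_0) \le \max_{f(x^n\|y^{n-1},s_0)} D\!\left(P_{Y^n\|X^n=x^n,\,s_0}\,\big\|\,T_{Y^n}\right).
\end{align*}
Taking $T_{Y^n}=T_{Y^n|Q_0}(\cdot\mid q_0)$ for an arbitrary initial node $q_0\in\cQ$ and invoking the graph-based factorization \eqref{eq:test_dist} makes the RHS depend on $(s_0,q_0)$ and on the chosen encoder $f$.

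Second I would unroll both factorizations in parallel: on the channel side, $P(y^n\|x^n,s_0)=\prod_{i=1}^n P(y_i\mid y^{i-1},x^i,s_0)$, and on the test side, $T_{Y^n|Q_0}(y^n\mid q_0)=\prod_{i=1}^n T_{Y|Q}(y_i\mid q_{i-1})$ with $q_{i-1}=\Phi_{q_0}(y^{i-1})=\phi(q_{i-2},y_{i-1})$. The standard chain rule for KL divergence then gives
\begin{align*}
D\!\left(P_{Y^n\|X^n=x^n,\,s_0}\,\big\|\,T_{Y^n|Q_0}(\cdot\mid q_0)\right)
= \sum_{i=1}^n \mathbb{E}\!\left[D\!\left(P_{Y_i\mid Y^{i-1},X^i,s_0}\,\big\|\,T_{Y|Q=Q_{i-1}}\right)\right],
\end{align*}
with the expectation taken under the joint distribution of $(X^i,Y^{i-1},Q_{i-1})$ induced by $f$, the channel, and the graph map.

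Third I would apply the FSC property \eqref{eq:FSC} to express the one-step predictive as a mixture over the latent channel state:
\begin{align*}
P(y_i\mid y^{i-1},x^i,s_0)=\sum_{s_{i-1}} P(s_{i-1}\mid y^{i-1},x^{i-1},s_0)\,P_{Y|X,S}(y_i\mid x_i,s_{i-1}),
\end{align*}
where the belief $P(s_{i-1}\mid y^{i-1},x^{i-1},s_0)$ is the recursive Bayesian update driven by $P_{S^+|X,Y,S}(s_i\mid x_i,y_i,s_{i-1})$ (read off from \eqref{eq:FSC}) and consumes the $\mathbbm{1}\{s_i=\cdots\}$ and $\mathbbm{1}\{q_i=\Phi_{q_0}(y^{i-1})\}$ terms in the claimed factorization of the joint. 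Since under the deterministic encoder $f$ the sequence $x^{i-1}$ is a function of $(y^{i-2},s_0)$, the conditioning on $x^{i-1}$ is redundant given $(Y^{i-1},s_0)$; writing things as in the theorem statement amounts to a cosmetic rewriting. Plugging this mixture back into the inner KL divergence produces exactly the integrand appearing in the theorem.

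Finally I would handle the $\min_{s_0,q_0}$ outside the $\max_f$. The node $q_0$ is a free parameter of the test distribution, so the bound holds for every $q_0$; for strongly connected FSCs, $\mathsf{C_{fb}}$ is independent of $s_0$, so the bound holds for every $s_0$ and we may replace it by the minimum. Crucially, the class $\{f(x^n\|y^{n-1},s_0)\}$ already permits an independent choice of encoder per $s_0$, so a routine argument gives $\max_f\min_{s_0,q_0}=\min_{s_0}\max_f\min_{q_0}$, which justifies placing the minimization inside the supremum over encoders. Normalizing by $n$ and taking $n\to\infty$ yields the stated bound. The only genuinely delicate step is the third one: the notation carries many coupled processes $(X^i,Y^i,S^i,Q^i,s_0)$ and one has to verify that the belief update used inside the mixture is consistent with the joint factorization in the theorem statement; once that bookkeeping is in place, the remaining manipulations are routine applications of the chain rule and Theorem~\ref{Th: DB_DI}.
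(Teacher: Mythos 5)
Your overall route is the same as the paper's: condition on $s_0$, apply the duality bound of Theorem~\ref{Th: DB_DI} with a $Q$-graph test distribution, unroll both causal factorizations via the chain rule for relative entropy, substitute the belief-over-state mixture $P(y_i\mid y^{i-1},x^i,s_0)=\sum_{s_{i-1}} P(s_{i-1}\mid y^{i-1},x^{i-1},s_0)\,P_{Y|X,S}(y_i\mid x_i,s_{i-1})$, then minimize over $(s_0,q_0)$ using strong connectivity and the freedom in $q_0$. Those steps match \eqref{eq: DUB_Q} in the paper's proof essentially line for line.

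The genuine gap is the normal limit $\lim_{n\to\infty}$ asserted in the theorem. You write ``Normalizing by $n$ and taking $n\to\infty$ yields the stated bound,'' but nothing you have said guarantees that the sequence
\begin{align*}
\underline{\mathsf{C}}_n = \frac{1}{n}\min_{s_0,q_0}\max_{f(x^n\|y^{n-1})} D\!\left(P_{Y^n\|X^n=x^n,S_0=s_0}\,\big\|\,T_{Y^n|Q_0=q_0}\right)
\end{align*}
actually converges. Without convergence the bound would have to be stated with a $\limsup$, which is a weaker (and different) statement than the theorem. The paper spends an entire subsection on this: it shows that $n\big(\underline{\mathsf{C}}_n - \tfrac{\log_2|\mathcal{S}|}{n}\big)$ is super-additive and then invokes Fekete's lemma. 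Establishing super-additivity is itself nontrivial --- it requires splitting the $n$-block divergence into an $m$-block term and a $k$-block term, and controlling the cross term via Lemma~\ref{lemma: conditional_diff}, which bounds the change in relative entropy under conditioning on the mid-block channel state $S_m$ by $\log_2|\mathcal{S}|$ (and whose own proof needs the causal-conditioning marginalization identity of Lemma~\ref{lemma: marg}). Your sketch also identifies $q_{i-1}=\Phi_{q_0}(y^{i-1})$ inside the expectation, which is needed for the cross-block argument (the graph node at time $m$ is a deterministic function of $y^m$ and $q_0$), so the ingredients are nearly in place, but you would still have to carry out the super-additivity argument explicitly. Until you do, the use of $\lim$ rather than $\limsup$ is unsupported. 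A minor secondary issue: your claim $\max_f \min_{s_0,q_0} = \min_{s_0}\max_f\min_{q_0}$ is stated without proof; the paper's passage from \eqref{eq: main_ub} to \eqref{eq: main_ub2} relies on the fact that the encoder class $f(x^n\|y^{n-1},s_0)$ is allowed to depend on $s_0$, and this deserves a sentence of justification rather than being labeled ``routine.''
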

The proof of Theorem \ref{Th: DUB_Q} is shown in Appendix \ref{app: DUB_Q}. While our upper bound in Theorem \ref{Th: DUB_Q} is expressed as a multi-letter expression, we introduce the following theorem, stating our main result concerning the computability of the bound by formulating it as an MDP problem. 
\begin{theorem}\label{th: Main_mdp}
For any FSC, the dual capacity upper bound in Theorem \ref{Th: DUB_Q} can be formulated as an MDP that is presented in Table \ref{table: main1}. In particular, $\cal{P}(\mathcal{S})\times \mathcal{Q}$ is the state space, $\mathcal{X}$ is the action space, and $\mathcal{Y}$ is the disturbance space. For unifilar FSCs or finite-memory state channels, the MDP state simplifies to a singleton and takes values in the finite set $\mathcal{S}\times \mathcal{Q}$. 
\end{theorem}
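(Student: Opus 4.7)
The plan is to recognize the bound in Theorem~\ref{Th: DUB_Q} as the average-reward objective of an infinite-horizon MDP. The key idea is to identify a sufficient statistic that both generates the per-letter reward and evolves in a Markov fashion under a deterministic transition driven by the input (the action) and the output (the disturbance). I would take the MDP state at time $i-1$ to be $z_{i-1}=(\beta_{i-1},q_{i-1})$, where $\beta_{i-1}(\cdot)=P(s_{i-1}|y^{i-1},x^{i-1},s_0)$ is the belief on the channel state and $q_{i-1}=\Phi_{q_0}(y^{i-1})$ is the $Q$-graph node; the action is $x_i\in\mathcal{X}$ and the disturbance is $y_i\in\mathcal{Y}$, with initial state $(\delta_{s_0},q_0)$ selected by the outer $\min_{s_0,q_0}$.

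Next I would verify the three MDP requirements. The per-letter reward
\begin{align*}
r(z_{i-1},x_i)=D\!\left(\sum_{s_{i-1}}\beta_{i-1}(s_{i-1})P_{Y|X,S}(\cdot|x_i,s_{i-1})\,\Big\|\,T(\cdot|q_{i-1})\right)
\end{align*}
depends solely on $(z_{i-1},x_i)$, and the disturbance law $P(y_i|z_{i-1},x_i)=\sum_{s_{i-1}}\beta_{i-1}(s_{i-1})P_{Y|X,S}(y_i|x_i,s_{i-1})$ likewise depends only on $(z_{i-1},x_i)$. Given $(z_{i-1},x_i,y_i)$, the next state is a deterministic function: the graph component updates via $q_i=\phi(q_{i-1},y_i)$, and the belief updates via the Bayesian filter $\beta_i(s_i)\propto\sum_{s_{i-1}}\beta_{i-1}(s_{i-1})P_{S^+,Y|X,S}(s_i,y_i|x_i,s_{i-1})$. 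Consequently, the maximization over causally conditioned deterministic encoders $f(x^n\|y^{n-1},s_0)$ becomes optimization over deterministic (history-dependent) MDP policies, which by standard average-reward theory reduces to stationary policies on $z_{i-1}$; this yields the MDP with state space $\mathcal{P}(\mathcal{S})\times\mathcal{Q}$, action space $\mathcal{X}$, and disturbance space $\mathcal{Y}$.

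For the simplifications, in the unifilar case an induction using $s_t=f(x_t,y_t,s_{t-1})$ shows that $s_{i-1}$ is a deterministic function of $(x^{i-1},y^{i-1},s_0)$, so $\beta_{i-1}$ is always a point mass and the state collapses to an element of $\mathcal{S}\times\mathcal{Q}$. For a finite-memory state channel, I would invoke the reformulation announced in Section~\ref{subsection: FC_extention}: define $\tilde S_t=(X_{t-k_1}^t,Y_{t-k_2}^t)$, which by \eqref{eq:input_def} produces an equivalent unifilar FSC, and the previous argument applies with $\mathcal{S}$ replaced by the finite alphabet of $\tilde S$. The main obstacle I anticipate is justifying the passage from the finite-horizon objective in Theorem~\ref{Th: DUB_Q} to the infinite-horizon MDP value: for the unifilar and finite-memory cases this reduces to standard solvability of the Bellman equation for finite MDPs under strong connectivity, while in the general belief-space case one must additionally invoke continuity and compactness arguments for the average-reward theory, as reviewed in Section~\ref{sec: DP_Main}.
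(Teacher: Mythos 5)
Your identification of the MDP components matches the paper exactly: the state $z_{i-1}=(\beta_{i-1},q_{i-1})$, action $x_i$, disturbance $y_i$, the KL reward, the Bayesian belief update combined with $q_i=\phi(q_{i-1},y_i)$, and the verification that the reward, the disturbance law, and the deterministic next-state map all depend only on $(z_{i-1},x_i,y_i)$. These are precisely items $1$--$3$ of the paper's Lemma~\ref{lemma: DP1}. Your treatment of the unifilar simplification ($\beta$ collapses to a point mass) is the paper's argument. For the finite-memory case you reroute through the reformulation $\tilde S_t=(X_{t-k_1}^t,Y_{t-k_2}^t)$ from Section~\ref{subsection: FC_extention}; the paper instead observes directly in Section~\ref{sec:simplified_mdp} that $\beta_{t-1}=P(S_{t-1}|x_{t-k_1}^{t-1},y_{t-k_2}^{t-1})$ takes at most $|\mathcal{X}|^{k_1}|\mathcal{Y}|^{k_2}$ values, so the belief component is already finite without changing the channel. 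Both routes are valid; the paper's is slightly more economical here since it avoids re-deriving the bound for a transformed channel, while yours connects more cleanly to the exploration machinery of Section~\ref{sec: Q_graph_exploration}.

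The genuine gap is the step you flag as the ``main obstacle'' and then defer to ``standard average-reward theory'' or ``continuity and compactness arguments.'' The bound in Theorem~\ref{Th: DUB_Q} has the form $\lim_{n}\max_{f(x^n\|y^{n-1},s_0)}\min_{s_0,q_0}$, whereas the MDP average reward $\rho^*$ in \eqref{eq: optimal_reward_1} has the form $\sup_{\pi}\liminf_{n}\min_{s_0,q_0}$. Equating these (item~4 of Lemma~\ref{lemma: DP1}, i.e., Theorem~\ref{theorem: formulation}) is not automatic and is the technical heart of the proof. The paper establishes it by (i) showing the normalized sequence $n(\underline{\mathsf{C}}_n-\tfrac{\log_2|\mathcal{S}|}{n})$ is super-additive, which needs Lemma~\ref{lemma: conditional_diff} to control the cost of conditioning on $S_m$, and then invoking Fekete's lemma to get existence of the limit and the identity $\lim_n\underline{\mathsf{C}}_n=\sup_n(\underline{\mathsf{C}}_n-\tfrac{\log_2|\mathcal{S}|}{n})$, and (ii) constructing a concatenated (periodically restarted) policy to prove the reverse inequality $\sup\liminf\geq\lim$. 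Neither direction follows from generic unichain/communicating MDP theory: the objective is a $\min$ over initial states inside the time average, the horizon-$n$ value is not the $n$-step MDP value, and in the general belief-state case the state space is a simplex, not finite. Without some version of the super-additivity argument and Lemma~\ref{lemma: conditional_diff}, the proposal asserts the crucial equality rather than proving it, so the proof is incomplete as written.
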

The proof of Theorem \ref{th: Main_mdp} is provided in Section \ref{sec: DP_Main1}. We begin by presenting the MDP formulation of the dual capacity upper bound as stated in Theorem \ref{th: Main_mdp}. Subsequently, we proceed to demonstrate its validity as a well-defined MDP, establishing that its induced average reward equals the dual capacity upper bound. For unifilar FSCs and finite-memory state channels, the MDP formulation in Theorem \ref{th: Main_mdp} simplifies significantly to an MDP with finite states, actions, and disturbances. MDPs have been extensively studied in the field of optimization and control (e.g., \cite{Arapostathis,Bertsekas05,puterman1994}). When the MDP states, actions, and disturbances have finite spaces, the MDP can be solved relatively easily using standard DP algorithms such as the value iteration and the policy iteration. In particular, these algorithms are used to evaluate the MDP and to extract a conjectured solution. The conjectured solution is then proved using the Bellman equation, leading to the simplification of the involved upper bound in Theorem \ref{Th: DUB_Q} into explicit expressions.

In Section~\ref{sec:examples}, we demonstrate that the duality bound provides simple upper bounds when the test distribution is chosen correctly. Specifically, we apply the developed framework to several FSCs, resulting in novel analytical upper bounds on their capacity. The results are summarized as follows:
\begin{itemize}
\item We investigate a wide family of NOST channels and derive their capacity as a simple closed-form analytical expression, as shown in Theorem \ref{th: NOST}.
\item We establish new upper bounds on the feedback capacity of the N-Ising channel. The bounds are provided in Theorem \ref{Th: N-Ising} and Theorem \ref{Th: N-Ising_Q4}.
\end{itemize}

Additionally, in Section \ref{sec: Q_graph_exploration}, we present two methods based on convex optimization and RL \cite{OronBasharfeedback,Permuter06_trapdoor_submit} to extract $Q$-graphs with good performance. Although these approaches were initially proposed for unifilar FSCs, we demonstrate in Section \ref{sec: Q_graph_exploration} that they can be adapted to finite-memory state channels by reformulating the channel to a new unifilar FSC, with its capacity being the same as that of the original channel.

\section{Upper Bounds via MDP} \label{sec: DP_Main} 
In this section, we first introduce our MDP framework and the Bellman equation.
Then, for a fixed graph-based test distribution, we formally present the MDP formulation in Theorem \ref{th: Main_mdp} of the dual capacity upper bound for FSCs with feedback. 

\subsection{MDP and the Bellman equation}
MDP is a mathematical framework for modeling sequential decision-making problems involving uncertain outcomes that depend on the system's current state. We consider an MDP scenario encompassing a state space denoted as $\mathcal{Z}$, an action space denoted as $\mathcal{U}$, and a disturbance space denoted as $\mathcal{W}$. First, the initial state $z_0$, is drawn randomly from the distribution $P_Z$. At each time step $t$, the system resides in a state $z_{t-1}\in\mathcal{Z}$, wherein the decision-maker selects an action $u_t\in\mathcal{U}$ and a disturbance $w_t\in\mathcal{W}$ is drawn from the conditional distribution $P_w(\cdot|z_{t-1},u_t)$. The state $z_t$ then evolves according to the transition function $F:\mathcal{Z}\times\mathcal{U}\times\mathcal{W}\to\mathcal{Z}$, resulting in $z_{t}=F(z_{t-1},u_{t},w_{t})$.

To determine the action $u_t$, the decision-maker relies on the function $\mu_t$, which maps histories $h_t=(z_0,w_0,\dots,w_{t-1})$ to corresponding actions, denoted as $u_t = \mu_t(h_t)$. Our objective, given a policy $\pi=\{\mu_1,\mu_2,...\}$ and a bounded reward function $g:\mathcal{Z}\times\mathcal{U}\to\mathbb{R}$, is to maximize the average reward over an infinite time horizon. The average reward achieved by policy $\pi$ is defined as 
$\rho_\pi = \liminf_{n\to\infty}\frac{1}{n}\mathbb{E}_\pi\left[\sum_{t=0}^{n-1}g\left(Z_t,\mu_{t+1}(z_0)\right)\right]$. Accordingly, the optimal average reward is given by $\rho^*=\sup_{\pi}\rho_\pi$.

The Bellman equation provides an alternative characterization for the optimal average reward in MDPs. The following theorem encapsulates the Bellman equation for our formulation.
\begin{theorem}[Bellman equation, \cite{Arapos93_average_cose_survey}]\label{Theorem:Bellman}
If a scalar $\rho\in\mathbb{R}$ and a bounded function $h:\mathcal{Z}\rightarrow\mathbb{R}$ satisfy
\begin{align}\label{eq:Bellman}
    \rho+h(z) = \max_{u\in\mathcal{U}}\pr{g\pr{z,u}+\int P_w(dw|z,u)h\pr{F\pr{z,u,w}}},\;\; \forall z\in\mathcal{Z}
\end{align}
then $\rho^{*}=\rho$.
\end{theorem}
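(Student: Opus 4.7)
The plan is to exhibit an explicit MDP whose average reward matches the multi-letter expression in Theorem \ref{Th: DUB_Q} term by term, and then to show the promised collapse of the state space for unifilar and finite-memory channels. Fix a graph-based test distribution $T_{Y|Q}$. For any encoding rule $f(x^n\|y^{n-1},s_0)$ and initial pair $(s_0,q_0)$, define the belief state and graph-node state by
$$\beta_t(s) \triangleq P\bigl(S_t=s \mid Y^t, X^t, s_0\bigr), \qquad q_t \triangleq \Phi_{q_0}(Y^t),$$
and declare the MDP state $Z_t=(\beta_t,q_t)\in\mathcal{P}(\mathcal{S})\times\mathcal{Q}$, the action $u_{t+1}=x_{t+1}\in\mathcal{X}$, the disturbance $w_{t+1}=y_{t+1}\in\mathcal{Y}$, and the per-stage reward
$$g(\beta,q,x) = D\!\left(\sum_{s}\beta(s)\,P_{Y|X,S}(\cdot|x,s)\,\bigg\|\,T_{Y|Q}(\cdot|q)\right),$$
which is precisely the summand inside the expectation in Theorem \ref{Th: DUB_Q}.

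The second step is to verify that $(Z_t,u_{t+1},w_{t+1})$ obey genuine MDP dynamics, namely that the disturbance law and the next state depend on the past only through the present state and action. Because $X_{t+1}$ is a deterministic function of $(X^t,Y^t,s_0)$ and the FSC is Markovian in its state, $S_t$ is conditionally independent of $X_{t+1}$ given $(Y^t,X^t,s_0)$; hence the predictive law of $Y_{t+1}$ reduces to
$$P_W(y\mid\beta,q,x) = \sum_{s} \beta(s)\, P_{Y|X,S}(y|x,s),$$
which depends on the history only through $(\beta,q)$ and $x$. A direct Bayes computation then gives the time-invariant update
$$\beta_{t+1}(s') = \frac{\sum_{s}\beta_t(s)\,P_{S^+,Y|X,S}(s',y_{t+1}|x_{t+1},s)}{P_W(y_{t+1}\mid \beta_t,q_t,x_{t+1})}, \qquad q_{t+1}=\phi(q_t,y_{t+1}),$$
which defines a deterministic transition $F:\mathcal{P}(\mathcal{S})\times\mathcal{Q}\times\mathcal{X}\times\mathcal{Y}\to\mathcal{P}(\mathcal{S})\times\mathcal{Q}$. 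Under the natural correspondence between deterministic encoding rules $f$ and deterministic history-dependent MDP policies, the joint law induced by $f$ on $(X^n,Y^n,S^n,Q^n)$ matches the MDP law started from $Z_0=(\delta_{s_0},q_0)$, so the inner expectation in Theorem \ref{Th: DUB_Q} equals $\mathbb{E}_\pi[g(Z_{i-1},u_i)]$; maximizing over $f$ and minimizing over $(s_0,q_0)$ then identifies the duality bound with the MDP's optimal average reward under the worst-case initial state.

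Finally I would specialize to the two structured classes. For a unifilar FSC, in which $s_t$ is a deterministic function of $(x_t,y_t,s_{t-1})$, and a known initial state $s_0$, a straightforward induction on $t$ shows that $\beta_t$ is the Dirac mass $\delta_{s_t}$ at the iterated deterministic value $s_t$, so the belief coordinate is redundant and the effective MDP state lives in the finite set $\mathcal{S}\times\mathcal{Q}$. For finite-memory state channels, the reformulation in Section \ref{subsection: FC_extention} replaces the state by $\tilde{S}_t=(X^t_{t-k_1},Y^t_{t-k_2})$, which is a deterministic function of the observed history and therefore yields a unifilar FSC, to which the preceding argument applies verbatim. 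The main obstacle is the verification step of the second paragraph: establishing that $(\beta_t,q_t)$ is genuinely a sufficient statistic with a time-invariant Markov transition and that the max over encoding rules and the sup over MDP policies realize the same value, so that the multi-letter expectation in Theorem \ref{Th: DUB_Q} can be rewritten without loss as the stationary average reward of a well-defined MDP.
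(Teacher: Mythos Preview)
Your proposal does not address the stated theorem. Theorem~\ref{Theorem:Bellman} is the classical Bellman-equation verification result for average-reward MDPs: it asserts that if some scalar $\rho$ and bounded function $h$ satisfy \eqref{eq:Bellman} for every state $z$, then the optimal average reward $\rho^*$ equals $\rho$. The paper does not prove this; it is quoted from \cite{Arapos93_average_cose_survey} as a tool. A proof would proceed by fixing an arbitrary policy $\pi$, telescoping $\sum_{t=0}^{n-1}\bigl[h(Z_t)-\mathbb{E}[h(Z_{t+1})\mid h_t]\bigr]$ using \eqref{eq:Bellman} as an inequality to get $\rho_\pi\le\rho$, and then exhibiting the stationary policy $u^*(z)\in\arg\max_u(\cdots)$ to show attainability, $\rho_{\pi^*}=\rho$.

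What you have written is instead a proof sketch of Theorem~\ref{th: Main_mdp} (and implicitly Theorem~\ref{theorem: formulation}): you construct the belief/graph-node state, verify the Markov property and the Bayes update, identify the reward with the summand of Theorem~\ref{Th: DUB_Q}, and then specialize to unifilar and finite-memory channels. That argument is essentially correct and matches the paper's own treatment in Section~\ref{sec: DP_Main1}, Section~\ref{sec:simplified_mdp}, and Appendix~\ref{app: formulation} (Lemma~\ref{lemma: DP1}). But none of it bears on Theorem~\ref{Theorem:Bellman}: you never use the hypothesis that $\rho$ and $h$ satisfy \eqref{eq:Bellman}, and you never conclude that $\rho^*=\rho$. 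You have proved a different statement.
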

For MDPs with finite states, actions, and disturbances, the process of finding the function $h(\cdot)$ and the scalar $\rho^*$ in the Bellman equation simplifies significantly. Specifically, the optimal policy $\pi^*$ can be deduced utilizing DP algorithms like value iteration, policy iteration, or RL techniques. Once an optimal policy is established, the Bellman equation transforms into a finite set of linear equations that can be easily resolved. In the subsequent section, we delve into the MDP formulation of the dual upper bound and elucidate that, for unifilar FSCs and finite-memory state channels, the MDP formulation consists of finite states, actions, and disturbances.

\subsection{MDP formulation of the Dual Upper Bound (Proof of Theorem \ref{th: Main_mdp})} \label{sec: DP_Main1} 
In the following, we formally present the MDP formulation of the dual capacity upper bound for FSCs with feedback stated in Theorem \ref{th: Main_mdp}. 
\begin{table}[t]
\caption{MDP Formulation}
\label{table: main1}
\begin{center}
\scalebox{1.2}{
 \begin{tabular}{|c | c |} 
 \hline
MDP notations & Upper bound on capacity \\ [0.5ex] 
 \hline\hline
MDP state & $(\beta_{t-1},q_{t-1})\triangleq(P(S_{t-1}|x^{t-1},y^{t-1},s_0),q_{t-1})$ \\ [0.3ex]
 \hline
Action & $x_t$ \\[0.3ex]
 \hline
Disturbance & $y_t$ \\[0.3ex]
 \hline
The reward & $D\left(\sum_{s_{t-1}}\beta_{t-1}(s_{t-1}) P_{Y|X,S}(\cdot|x_t,s_{t-1})\middle\|T_{Y|Q}(\cdot|q_{t-1})\right)$\\[0.3ex]
 \hline
\end{tabular} }
\end{center}
\end{table}
Fix a $Q$-graph and a corresponding graph-based test distribution $T_{Y|Q}$. The MDP state at time $t$ is defined as $z_{t-1}\triangleq (\beta_{t-1},q_{t-1})$, where $\beta_{t-1}\triangleq \left[\beta_{t-1}(0)\cdots,\beta_{t-1}(|\mathcal{S}|-1)\right]$ and $\beta_{t-1}(s)\triangleq P(S_{t-1}=s|x^{t-1},y^{t-1},s_0)$.
The action is defined as the channel input $u_t \triangleq x_t$, and the disturbance is defined as the channel output $w_t\triangleq y_t$. The reward function is defined as follows:
\begin{align} \label{reward_1}
	g(z_{t-1},x_t) \triangleq D\left(\sum_{s_{t-1}}\beta_{t-1}(s_{t-1}) P_{Y|X,S}(\cdot|x_t,s_{t-1})\middle\|T_{Y|Q}(\cdot|q_{t-1})\right).
\end{align}
The MDP formulation is summarized in Table \ref{table: main1}. The induced average reward in the infinite horizon regime of this MDP is given by
\begin{align} \label{eq: optimal_reward_1}
    &\rho^*=\sup\liminf_{n\to\infty}\min_{s_0,q_0}\frac{1}{n}\sum_{t=1}^{n}\mathbb{E}\left[D\left(\sum_{s_{t-1}}\beta_{t-1}(s_{t-1})P_{Y|X,S}(\cdot|x_t,s_{t-1})\middle\|T_{Y|Q}(\cdot|Q_{t-1})\right)\right], 
\end{align}
where the supremum is over all deterministic functions $\{f_i:\mathcal{X}^{i-1}\times\mathcal{Y}^{i-1}\rightarrow \mathcal{X}\}_{i\ge 1}$. The following theorem summarizes the relationship between the upper bound in Theorem \ref{Th: DUB_Q} and $\rho^\ast$.
\begin{theorem}\label{theorem: formulation}
The upper bound in Theorem \ref{Th: DUB_Q} is equal to the optimal average reward in \eqref{eq: optimal_reward_1}. That is, the capacity is upper bounded by $\rho^*$.
\end{theorem}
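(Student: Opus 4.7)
The plan is to verify that the MDP components in Table \ref{table: main1} are correctly specified, so that the induced infinite-horizon average reward coincides term-by-term with the objective in Theorem \ref{Th: DUB_Q}. Concretely, I would check the following four items: (i) the disturbance $y_t$ has a conditional distribution that depends only on $(z_{t-1}, x_t)$; (ii) the state $z_t=(\beta_t,q_t)$ is a deterministic function $F(z_{t-1}, x_t, y_t)$; (iii) the reward $g(z_{t-1}, x_t)$ in \eqref{reward_1} equals, term-wise, the divergence appearing inside the expectation in Theorem \ref{Th: DUB_Q}; and (iv) optimizing over deterministic causal encoders $f_t:\cX^{t-1}\times\cY^{t-1}\to\cX$ is equivalent to optimizing over MDP policies on histories $h_t=(z_0,w_1,\dots,w_{t-1})$.

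For (i), I would compute $P(y_t\mid x^t,y^{t-1},s_0)=\sum_{s_{t-1}} \beta_{t-1}(s_{t-1})\,P_{Y|X,S}(y_t|x_t,s_{t-1})$, which is a function of $(\beta_{t-1},x_t)$ only, hence of $(z_{t-1},u_t)$. For (ii), $q_t=\phi(q_{t-1},y_t)$ holds by the definition of the $Q$-graph transition, and $\beta_t$ is obtained from $(\beta_{t-1},x_t,y_t)$ via a standard Bayesian update using the kernel $P_{S^+,Y|X,S}$; both updates are deterministic. For (iii), substituting $\beta_{t-1}(s_{t-1})=P(s_{t-1}|x^{t-1},y^{t-1},s_0)$ and $q_{t-1}=\Phi_{q_0}(y^{t-1})$ matches \eqref{reward_1} with the integrand in Theorem \ref{Th: DUB_Q}. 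Taking expectation against the joint distribution induced by any policy recovers the expectation in Theorem \ref{Th: DUB_Q}.

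For (iv), I would argue that, since $z_0=(\delta_{s_0},q_0)$ is determined by $(s_0,q_0)$ and the state evolves deterministically from $(x^{t-1},y^{t-1})$, the history $h_t$ and the tuple $(x^{t-1},y^{t-1},s_0)$ generate the same $\sigma$-algebra along any sample path. Consequently, every deterministic encoder $f_t$ induces an MDP policy $\mu_t$ and conversely, producing the same joint law on $(X^n,Y^n,S^n,Q^n)$ and hence the same expected cumulative reward. The outer $\min_{s_0,q_0}$ in Theorem \ref{Th: DUB_Q} corresponds to minimizing over the initial MDP state $z_0$, since $s_0$ fully determines $\beta_0$.

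The main obstacle lies in reconciling $\lim_{n\to\infty}\max$ in Theorem \ref{Th: DUB_Q} with $\sup \liminf_{n\to\infty}$ in \eqref{eq: optimal_reward_1} and in commuting this limit with the $\min_{s_0,q_0}$. I would handle this by (a) noting that the per-stage reward is bounded whenever $T_{Y|Q}(\cdot|q)$ has full support (which can always be arranged by perturbation without loss of generality for the upper bound), so Fatou/bounded-convergence type arguments apply, and (b) invoking standard average-reward MDP theory (via Theorem \ref{Theorem:Bellman}) to deduce that, at least in the unifilar/finite-memory case where the state space is finite, an optimal stationary policy exists under which $\lim$ and $\liminf$ coincide, while in the general (belief-valued) case the inequality $\rho^*\ge \limsup_n\max_{f}\frac{1}{n}\sum\mathbb{E}[\cdot]$ suffices for the upper-bound direction. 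Combining these observations, both quantities reduce to the same supremum over policies of the long-run average expected reward, establishing the desired equality.
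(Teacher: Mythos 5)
Your items (i)--(iii) line up closely with the first three parts of the paper's Lemma~\ref{lemma: DP1}: you verify the Markov disturbance property, the deterministic state update (the Bayesian update of $\beta$ in~\eqref{Next_FSC} together with $q_t=\phi(q_{t-1},y_t)$), and the term-by-term match of the reward with the integrand of Theorem~\ref{Th: DUB_Q}. Those parts are fine and follow essentially the same route as the paper.

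The gap is in item (iv), which is where the real work lies. You need to show that $\lim_{n\to\infty}\min_{s_0,q_0}\max_{f(x^n\|y^{n-1})} c(x^n,s_0,q_0)$ equals $\sup_{\pi}\liminf_{n\to\infty}\min_{s_0,q_0}c(x^n,s_0,q_0)$. The easy direction is $\rho^*\le \lim_n \min\max$: any fixed policy restricted to horizon $n$ is dominated by the $n$-horizon maximum. What you actually need for the capacity upper bound is the reverse, $\rho^*\ge\lim_n \min\max$, and this is exactly where you defer to ``standard average-reward MDP theory.'' That does not close the gap. Theorem~\ref{Theorem:Bellman} only says that \emph{if} a solution $(\rho,h)$ to the Bellman equation is found, then $\rho^*=\rho$; it gives no a priori relation between $\rho^*$ and the limit of the $n$-stage optimal values, nor does it supply the existence of an optimizing stationary policy, especially for the continuous belief-state MDP $\mathcal{P}(\mathcal{S})\times\mathcal{Q}$. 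General ACOE/vanishing-discount results for unbounded Borel-state average-reward MDPs require additional structural hypotheses you have not verified, and simply perturbing $T_{Y|Q}$ to full support to bound the reward does not by itself supply them.

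The paper instead proves this reverse inequality constructively. It first shows (Appendix~\ref{app:subsec_limexists}) that the sequence $n\bigl(\underline{\mathsf{C}}_n-\tfrac{\log_2|\mathcal{S}|}{n}\bigr)$ is super-additive, using the chain rule for relative entropy together with Lemma~\ref{lemma: conditional_diff} (conditioning on $S_m$ changes the divergence by at most $\log_2|\mathcal{S}|$), so by Fekete's lemma $\lim_n\underline{\mathsf{C}}_n$ exists and equals $\sup_n\bigl[\underline{\mathsf{C}}_n-\tfrac{\log_2|\mathcal{S}|}{n}\bigr]$. Then, for any $\epsilon>0$, it picks $k$ large and \emph{explicitly constructs} a concatenated policy $\tilde f$ that repeats the optimal $k$-block mapping forever; under this single policy, the per-block divergences are each within $\log_2|\mathcal{S}|$ of $k\underline{\mathsf{C}}_k$, from which one reads off $\sup_{\pi}\liminf_n \ge \lim_n\underline{\mathsf{C}}_n-\epsilon$. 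This repeating-block construction and the $\log_2|\mathcal{S}|$-slack lemma are the missing ingredients in your argument; without them, the assertion that ``both quantities reduce to the same supremum over policies'' is unsubstantiated.

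Two smaller points: (1) You should be careful with the order of $\min_{s_0,q_0}$ and $\max_f$ --- the paper moves the minimum inside precisely because the encoder $f$ is allowed to depend on $s_0$; you gesture at this by noting $z_0$ determines $(s_0,q_0)$ but should state the exchange explicitly, as in~\eqref{eq: main_ub}--\eqref{eq: main_ub2}. (2) Your WLOG-by-perturbation of $T_{Y|Q}$ is plausible but would need a continuity argument of its own; the paper's route does not require it.
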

The proof of Theorem \ref{theorem: formulation} is given in Appendix~\ref{app: formulation}. As part of the proof, we show that the formulation above constitutes a valid MDP.

\subsection{Simplified MDP formulation}\label{sec:simplified_mdp}
In the previous section, we introduced an MDP formulation of the dual upper bound that holds for any FSC. In this section, we highlight that the proposed formulation simplifies significantly in the case of unifilar FSCs and finite-memory state channels. Specifically, we demonstrate that, for these cases, the MDP state space is finite.

Recall the MDP state is $z_{t-1}\triangleq (\beta_{t-1},q_{t-1})$, where $\beta_{t-1}\triangleq P(S_{t-1}|x^{t-1},y^{t-1},s_0)$. For unifilar FSCs, note that
\begin{align}
    \beta_{t-1}(s_{t-1})&\triangleq P(s_{t-1}|x^{t-1},y^{t-1},s_0)\nn\\
    &=\mathbbm{1}\{s_{t-1}=f(x_{t-1},y_{t-1},s_{t-2})\}.\nn
\end{align}
Consequently, for unifilar FSCs, the vector $\beta_{t-1}$ is equal to $1$ at the corresponding coordinate where $s_{t-1}=f(x_{t-1},y_{t-1},s_{t-2})$ and is $0$ otherwise. Thus, the MDP state can be represented as $z_{t-1}=(s_{t-1},q_{t-1})$, which takes values in $\mathcal{S}\times\mathcal{Q}$.

For finite-memory state channels, we observe that
\begin{align}\label{eq: finite_memo_state}
        \beta_{t-1}(s_{t-1})&\triangleq P(s_{t-1}|x^{t-1},y^{t-1},s_0)\nn\\
        &=P(s_{t-1}|x_{t-k_1}^{t-1},y_{t-k_2}^{t-1}),
\end{align}
where $k_1,k_2>0$ are finite and do not depend on $t$. Therefore, according to \eqref{eq: finite_memo_state}, the MDP state space is finite and at most of size $|\mathcal{X}|^{k_1}\cdot|\mathcal{Y}|^{k_2}$.

\begin{remark}\label{remark: bellman_sol}
In this simplified MDP formulation, the state space, actions, and disturbances are all finite. Consequently, solving the MDP numerically and finding the optimal policy becomes relatively straightforward. Specifically, standard DP and RL algorithms converge quickly to the optimal policy. Moreover, with an optimal policy in hand, the Bellman equation can be represented as a finite set of linear equations due to the finite alphabets. These equations can be directly solved to obtain the scalar $\rho^*$ and the function $h$ that satisfy the Bellman equation. Consequently, this provides an explicit expression for the upper bound on capacity (rather than a numerical approximation). The upper bound on capacity is determined by the previously found $\rho^*$, i.e., $\mathsf{C_{fb}}\leq \rho^*$.
\end{remark}

\section{$Q$-graph Exploration}\label{sec: Q_graph_exploration}
The main challenge so far lies in finding a suitable $Q$-graph that its corresponding graph-based test distribution will result in a good or tight upper bound when optimized. Once a $Q$-graph is selected, our methodology allows us to evaluate the upper bound and derive its analytical expression by solving the MDP problem. In this section, we introduce two approaches, proposed in \cite{OronBasharfeedback, aharoni2022feedback}, to address this challenge. Although these approaches hold for unifilar FSCs, we show in Section \ref{subsection: FC_extention} that they can be adapted to finite-memory state channels.

\subsection{Exploration via the $Q$-graph bounds}\label{subsection: FC_Qbounds}
In \cite{OronBasharfeedback}, the authors introduced optimization algorithms to compute upper and lower bounds on the capacity of unifilar FSCs with feedback. For a fixed $Q$-graph, these algorithms were based on single-letter bounds derived in \cite{Sabag_UB_IT}. To identify a suitable $Q$-graph, they suggested conducting an exhaustive search over all valid $Q$-graphs\footnote{A valid $Q$-graph is a directed graph that is aperiodic, i.e., connected and has period 1.}. However, the computational complexity of this exhaustive search increases exponentially with the graph size and the cardinality of the channel output. For example, when $|\mathcal{Y}|=2$ and we search over all valid $Q$-graphs of size 6, there are 655,424 different graphs. Consequently, this approach is primarily suitable for exploring small-sized $Q$-graphs. 

\begin{remark}
In \cite{OronBasharfeedback}, it is demonstrated that the $Q$-graph upper bound from \cite{Sabag_UB_IT} can be formulated as a standard convex optimization problem. However, the analytical computation of the bound remains challenging due to the need to verify the KKT conditions. The approach introduced in this paper offers an alternative and significantly simpler approach to derive analytical upper bounds, even for channels with large alphabets. However, it is worth noting that, for a fixed $Q$-graph, our methodology mandates optimizing the corresponding graph-based test distribution to obtain meaningful bounds. In contrast, the $Q$-graph upper bound does not necessitate such optimization as it does not involve the use of a test distribution.
\end{remark}

\subsection{Exploration via RL}\label{subsection: FC_RL}
In \cite{Permuter06_trapdoor_submit}, the feedback capacity of unifilar FSCs was formulated as an MDP problem. Then, in \cite{aharoni2022feedback}, the authors presented an RL-based algorithm to compute the feedback capacity based on this MDP formulation. The key advantage of the algorithm is its ability to evaluate the capacity even for channels with large alphabets. While this methodology is primarily employed for numerical evaluation of the capacity, in some cases, it can also be utilized to extract optimal or near-optimal $Q$-graphs.
Specifically, the algorithm generates a histogram of the MDP states that are visited under an estimated optimal policy. If the resulting histogram of the MDP states is discrete, indicating that only a finite number of MDP states that are visited, a $Q$-graph can be extracted. In this case, each visited MDP state serves as a node in the $Q$-graph, and the labeled edges capture the evolution of the MDP states as a function of the channel outputs. For additional details, we refer the reader to \cite{Permuter06_trapdoor_submit}.

\subsection{Extension to finite-memory state channels}\label{subsection: FC_extention}
In this section, we demonstrate that the two approaches in Sections \ref{subsection: FC_Qbounds} and \ref{subsection: FC_RL} can be adapted to finite-memory state channels. The idea is to show that the capacity of a finite-memory state channel can be computed as the capacity of a new unifilar FSC obtained by reformulating the original channel.

Consider a finite-memory state channel and define the following transformation:
\begin{itemize}
    \item The channel state is $\tilde{S}_{t}\triangleq (X^t_{t-k_1},Y^t_{t-k_2})$.
    \item The channel input and the channel output remain the same.
\end{itemize}
We show that the above transformation defines a new unifilar FSC with a transition kernel $P_{Y,\tilde{S}^+|X,\tilde{S}}$. In other words, the new channel follows the time-invariant Markov property of FSCs defined in Eq. \eqref{eq:FSC}. Further, we establish the relationship between the capacity of the original channel and its transformed version through the following theorem.
\begin{theorem}\label{th: finite_memo_as_unifilar}
The capacity of any finite-memory state channel is equal to the capacity of its transformed FSC $P_{Y,\tilde{S}^+|X,\tilde{S}}$. Further, the transformed channel is a unifilar FSC.
\end{theorem}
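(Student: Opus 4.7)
The plan is to prove the theorem in three steps: (i) exhibit an explicit deterministic next-state rule for $\tilde S_t$ to conclude unifilarity; (ii) derive a time-invariant transition kernel $P_{Y,\tilde S^+\mid X,\tilde S}$ from \eqref{eq:input_def} and verify the FSC Markov property \eqref{eq:FSC}; (iii) show that, for every feedback policy, the original channel and the transformed channel induce the same law on $(X^n,Y^n)$, and invoke Theorem \ref{FSC_feedback_Capacity} to conclude equality of the feedback capacities.

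Step~(i) is the simplest. Since $\tilde S_{t-1}=(X_{t-1-k_1}^{t-1},Y_{t-1-k_2}^{t-1})$ and $\tilde S_t=(X_{t-k_1}^t,Y_{t-k_2}^t)$, the new state is obtained from the previous one by dropping its oldest coordinates $X_{t-1-k_1}$ and $Y_{t-1-k_2}$ and appending the fresh pair $(x_t,y_t)$. This sliding-window rule defines a deterministic function $\tilde f:\mathcal X\times\mathcal Y\times\tilde{\mathcal S}\to\tilde{\mathcal S}$ satisfying $\tilde S_t=\tilde f(X_t,Y_t,\tilde S_{t-1})$, which is exactly the unifilar structure.

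Step~(ii) is the main technical step. I would first expand, by conditioning on $s_{t-1}$ and applying \eqref{eq:input_def},
\begin{align*}
P(\tilde s_t,y_t\mid x^t,y^{t-1},\tilde s^{t-1})
=\mathbbm{1}\{\tilde s_t=\tilde f(x_t,y_t,\tilde s_{t-1})\}\sum_{s_{t-1}}P_{Y|X,S}(y_t\mid x_t,s_{t-1})\,P(s_{t-1}\mid \tilde s_{t-1}).
\end{align*}
The delicate point is to justify that $P(s_{t-1}\mid x^{t-1},y^{t-1})$ is a function of $\tilde s_{t-1}$ alone and, crucially, that this function is the same for every $t$. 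Iterating \eqref{eq:input_def} shows that $S_{t-1}$ is conditionally independent of $(x^{t-2-k_1},y^{t-2-k_2},s^{t-2})$ given $\tilde s_{t-1}$, so the belief on $S_{t-1}$ collapses to a stationary function of $\tilde s_{t-1}$. This identifies the kernel $P_{Y,\tilde S^+\mid X,\tilde S}$ and verifies \eqref{eq:FSC}.

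For step~(iii), I would observe that under any feedback input distribution $P(x^n\|y^{n-1})$ the output law in either formulation is
\begin{align*}
P(y_t\mid x^t,y^{t-1})=\sum_{s_{t-1}}P_{Y|X,S}(y_t\mid x_t,s_{t-1})\,P(s_{t-1}\mid\tilde s_{t-1}),
\end{align*}
so the two models induce identical joint distributions on $(X^n,Y^n)$ and identical directed informations. Theorem \ref{FSC_feedback_Capacity} then yields equality of the two feedback capacities. The main obstacle I anticipate is the time-invariance of $P(s_{t-1}\mid\tilde s_{t-1})$; a secondary issue is the boundary regime $t\le\max(k_1,k_2)$, which I would handle by absorbing the missing past symbols into the initial state $\tilde s_0$ and noting that, by strong connectivity, this choice does not affect $\mathsf{C_{fb}}$.
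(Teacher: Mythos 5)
Your proposal is correct and follows essentially the same route as the paper: you establish unifilarity via the sliding-window update $\tilde s_t=\tilde f(x_t,y_t,\tilde s_{t-1})$, verify the FSC Markov property by marginalizing over $s_{t-1}$ and observing that $P(s_{t-1}\mid x^{t-1},y^{t-1})$ collapses to $P(s_{t-1}\mid\tilde s_{t-1})$ via \eqref{eq:input_def}, and conclude capacity equality since the output law under any causal input mapping is unchanged. The only point worth noting is that the step you flag as ``delicate'' is in fact immediate: \eqref{eq:input_def} directly posits a time-invariant kernel $P(s_t\mid x_{t-k_1}^t,y_{t-k_2}^t)$ that is independent of $s^{t-1}$, so a single application (no iteration) already gives $P(s_{t-1}\mid x^{t-1},y^{t-1},s^{t-2})=P(s_{t-1}\mid \tilde s_{t-1})$ for every policy and every $t$.
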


\begin{proof}[Proof of Theorem \ref{th: finite_memo_as_unifilar}]
We first show that the new channel is a unifilar FSC. The proof consists of the following three steps.
\begin{enumerate}
    \item Conditioned on the previous channel state $\tilde{S}_{t-1}$ and the channel input $X_t$, the channel output $Y_t$ is independent of any past states, inputs, and outputs. In particular, note that 
    \begin{align} \label{eq: new_channel_law}
    P(y_t|x^t,y^{t-1},\tilde{s}^{t-1}) &=\sum_{s_{t-1}} P(s_{t-1}|x^t,y^{t-1},\tilde{s}^{t-1})P(y_t|x^t,y^{t-1},\tilde{s}^{t-1},s_{t-1})\nn\\
    &\stackrel{(a)}= \sum_{s_{t-1}} P(s_{t-1}|x^{t},y^{t-1},\tilde{s}^{t-1})P_{Y|X,S}(y_t|x_t,s_{t-1})\nn\\
    &\stackrel{(b)}= \sum_{s_{t-1}} P(s_{t-1}|\tilde{s}_{t-1})P_{Y|X,S}(y_t|x_t,s_{t-1}),
    \end{align}
    where $(a)$ follows by the Markov property of the original channel, which implies the Markov chain $Y_t-(X_t,S_{t-1})-(X^{t-1},Y^{t-1},\tilde{S}^{t-1})$, and $(b)$ follows by the fact that $\tilde{s}_{t-1}$ includes $(x_{t-k_1-1}^{t-1},y_{t-k_2-1}^{t-1})$. Hence, according to \eqref{eq: new_channel_law}, the required property holds.
    \item Next, we show that the unifilar property holds. Since $\tilde{s}_t=\left(x_{t-k_1}^t,y_{t-k_2}^t\right)$, it follows directly that there exits a deterministic function $\tilde{f}:\mathcal{X}\times \mathcal{Y}\times\tilde{\mathcal{S}}\to \tilde{\mathcal{S}}$ such that $\tilde{s}_t=\tilde{f}(x_t,y_t,\tilde{s}_{t-1})$. This implies that the unifilar property indeed holds.
    \item Recall that both $\mathcal{X}$ and $\mathcal{Y}$ are finite cardinalities. Accordingly, we directly deduce that the cardinality of the new channel state is finite as well.
\end{enumerate}

Finally, we note that the capacity of the new unifilar FSC is equal to the capacity of the original channel. Specifically, given an input sequence, the corresponding outputs of the new channel are drawn according to the statistics of the original channel model. Furthermore, the maximization domain remains unchanged as we still maximize over $P(x^n\|y^{n-1})$.
\end{proof}

\section{Examples}\label{sec:examples}
In this section, we present several examples of FSCs. For all examples, both the channel input and the channel state take values from the binary alphabet, i.e., $\mathcal{S}=\mathcal{X} = \{0,1\}$. The studied examples illustrate the simplicity of deriving analytical upper bounds using our proposed methodology.

\subsection{The NOST Channels}\label{subsec:nost}
The NOST channels, studied in \cite{shemuel2022feedback}, are an extension of the well-known Previous  Output is the STate (POST) channels \cite{POSTchannel}. For these channels, the state of the channel depends stochastically on the previous channel output according to $P(s_t|y_t)$. We focus on a broad family of NOST channels that is defined as follows. At time $t$, if $s_{t-1} = 0$, the channel behaves according to a $Z$ topology with a parameter of $0.5$, while if $s_{t-1} = 1$, it behaves according to an $S$ topology with a parameter of $0.5$. The state evolution $P(s_t|y_t)$ follows a binary symmetric channel (BSC) topology with a transition parameter of $\epsilon$.

\begin{figure}[b]
\centering
    \includegraphics[scale = 0.6]{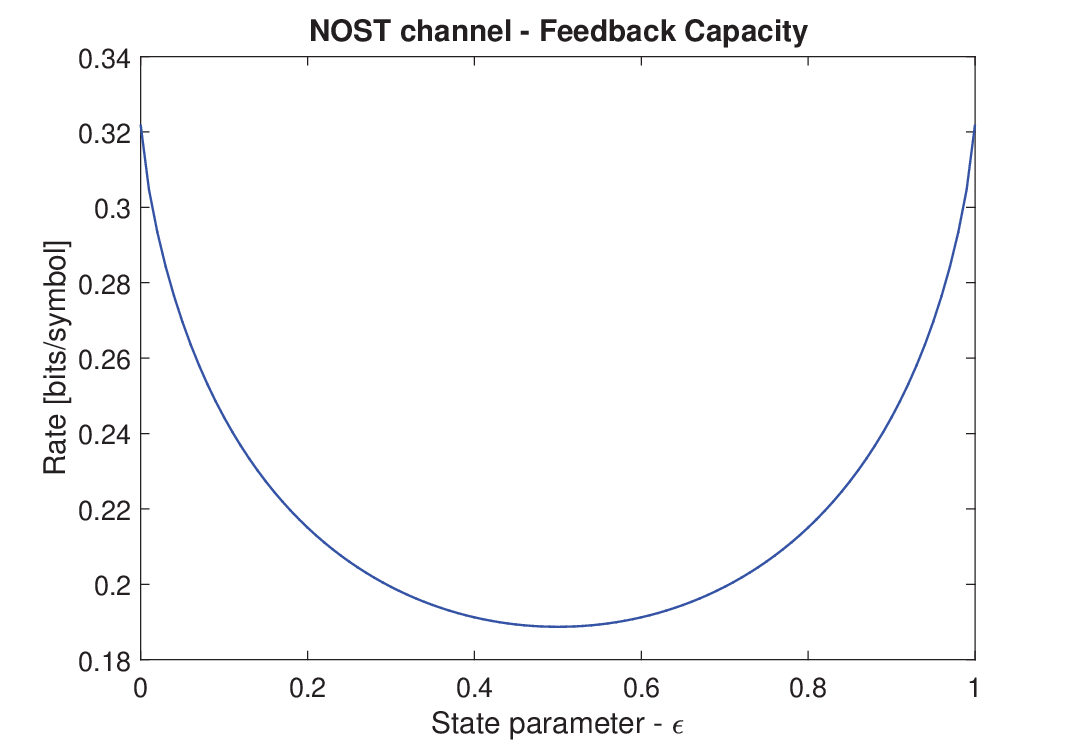}
    \caption{The feedback capacity of the NOST channel as a function of the state parameter $\epsilon$.}
    \label{fig:NOST_capacity}
\end{figure}
The feedback capacity of POST channels was derived in \cite{POSTchannel}. 
In a subsequent extension of this work \cite{shemuel2022feedback}, the authors determined the capacity of NOST channels as a single-letter optimization problem over mutual information. 
Our contribution is a new result regarding the capacity of NOST channels with BSC state evolution, as defined above. We establish their capacity through a simple closed-form formula, as demonstrated in the following theorem. To illustrate, Figure \ref{fig:NOST_capacity} provides a numerical evaluation of the capacity.
\begin{theorem} \label{th: NOST}
For any $\epsilon\in[0,1]$, the capacity of the NOST channels is given by
\begin{align*}
\mathsf{C_{NOST}}(\epsilon) = \frac{1}{2}\log_2\left(\frac{1}{4}\left(\frac{\epsilon}{1-a}\right)^{\epsilon}\left(\frac{1+\bar{\epsilon}}{a}\right)^{1+\bar{\epsilon}}\right),
\end{align*}
where
\begin{align*}
    a = \frac{\epsilon^\epsilon(1+\bar{\epsilon})^{1+\bar{\epsilon}}}{\epsilon^\epsilon(1+\bar{\epsilon})^{1+\bar{\epsilon}} + \bar{\epsilon}^{\bar{\epsilon}}(1+\epsilon)^{1+\epsilon}}.
\end{align*}
\end{theorem}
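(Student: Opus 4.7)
The plan is to apply the MDP framework of Theorem \ref{th: Main_mdp} to a well-chosen $Q$-graph, solve the resulting Bellman equation in closed form to produce the claimed upper bound, and then match this upper bound with the single-letter capacity formula for NOST channels established in \cite{shemuel2022feedback}.

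First, I would observe that, since $P(s_t \mid y_t)$ depends only on the current output, the NOST channel is a finite-memory state channel in the sense of \eqref{eq:input_def} with $k_1 = 0$ and $k_2 = 0$. Consequently, Theorem \ref{th: finite_memo_as_unifilar} applies, and the simplified MDP formulation of Section \ref{sec:simplified_mdp} yields a finite MDP. Concretely, the belief $\beta_{t-1}(s) = P(S_{t-1}=s \mid y_{t-1})$ is obtained from $y_{t-1}$ through the BSC$(\epsilon)$ kernel and takes only two possible vectors, indexed by $y_{t-1} \in \{0,1\}$.

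Second, motivated by the $0 \leftrightarrow 1$ symmetry of the channel -- the involution $(x,y,s) \mapsto (\bar x, \bar y, \bar s)$ exchanges the $Z$ and $S$ topologies while preserving the BSC$(\epsilon)$ state kernel -- I would choose the two-node $Q$-graph on $\mathcal{Q} = \{0,1\}$ with deterministic transition $\phi(q,y) = y$, so that $q_{t-1} = y_{t-1}$. The graph-based test distribution is then parametrized by a single scalar $a \in [0,1]$ via
\begin{align*}
T(Y=0 \mid Q=0) = T(Y=1 \mid Q=1) = a, \qquad T(Y=1 \mid Q=0) = T(Y=0 \mid Q=1) = 1-a,
\end{align*}
so as to respect the involution. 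Since both $\beta_{t-1}$ and $q_{t-1}$ are deterministic functions of $y_{t-1}$, the MDP effectively collapses to only two states.

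Third, I would write out the reward \eqref{reward_1} as a KL divergence between two explicit binary distributions for each of the two MDP states and each action $x \in \{0,1\}$, together with the output kernel $P(y \mid \beta, x)$ induced by $P_{Y \mid X, S}$. By the involution symmetry, the relative-value function $h$ in \eqref{eq:Bellman} takes two values related by the symmetry, so the Bellman equation reduces to a small linear system whose solution gives $\rho^\ast$ as an explicit function of $a$. Optimizing $\rho^\ast$ over $a$ by the first-order condition should produce precisely the expression for $a$ stated in the theorem, and substitution should yield the claimed closed-form $\mathsf{C_{NOST}}(\epsilon)$. The matching lower bound follows by evaluating the single-letter capacity formula of \cite{shemuel2022feedback} on this subfamily and checking that the maximizing input distribution attains the same value.

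The main technical obstacle is the explicit solution of the Bellman system: one must identify the indifference condition -- namely that under the optimal stationary policy both actions are equally attractive in each MDP state -- which is what pins down the particular value of $a$ in the statement, and then carry out the algebraic manipulations that recognize the resulting $\rho^\ast$ as the claimed logarithmic expression. A secondary subtlety is verifying that the two-node $Q$-graph and its associated test distribution are in fact optimal, rather than just producing some valid bound; the matching achievability argument via \cite{shemuel2022feedback} is what certifies this a posteriori.
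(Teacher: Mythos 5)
Your proposal is essentially correct and follows the same route as the paper for the converse: same first-order Markov $Q$-graph, same symmetric test distribution $T_{Y|Q=0}(0)=T_{Y|Q=1}(1)=a$, same reduction to a finite MDP, and the ``indifference condition'' (equalization of the reward over $x\in\{0,1\}$) that you identify is precisely what pins down the stated value of $a$. One refinement to the way you describe the Bellman step: by the $0\leftrightarrow 1$ involution you correctly flag, the two reachable MDP states must share the same relative value (and can be gauged to zero), so there is no genuine linear system in $h$ to solve along the optimal trajectory --- the Bellman equation degenerates to the single equalization/consistency condition, which is exactly how the paper verifies its guessed $(\rho^*, h)$ pair. (The paper nominally lists $h$ on four $(\beta,q)$ pairs, but two of those are unreachable once $\beta$ and $q$ are both deterministic functions of $y_{t-1}$; your observation that the chain collapses to two states is cleaner.)

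The only substantive divergence from the paper is the achievability step. You propose to certify tightness by evaluating the single-letter NOST capacity formula of \cite{shemuel2022feedback} on this subfamily. The paper instead reformulates the NOST channel as a unifilar FSC via Theorem \ref{th: finite_memo_as_unifilar}, applies the $Q$-graph lower bound of \cite{OronBasharfeedback} with the same first-order Markov $Q$-graph, constructs an explicit BCJR-invariant input $P_{X|\tilde S,Q}$, and computes the rate $I(X,\tilde S;Y|Q)$ to match $\rho^*$. Both routes are valid; your way cites an external capacity characterization but still requires solving the \cite{shemuel2022feedback} optimization and matching it term by term, whereas the paper's way stays self-contained within the $Q$-graph machinery and produces an explicit achieving scheme. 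Neither is obviously shorter, but note that if you take your route you must actually carry out that single-letter maximization, not merely ``check that the maximizing input attains the same value'' --- the matching of the two closed forms is where the algebra lives.
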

The proof of Theorem \ref{th: NOST} is provided at the end of this section, where both upper and lower bounds are presented and shown to be equal, thereby establishing the capacity. These bounds were obtained using a first-order Markov $Q$-graph. In fact, it can be concluded from \cite{shemuel2022feedback} that a first-order Markov $Q$-graph is sufficient to solve any instance of NOST channels, where the dependence of the channel state on the previous channel output does not necessarily have to follow a BSC topology.

\begin{remark}
As mentioned earlier, POST channels are instances of NOST channels, where the channel state is a deterministic function of the output, that is, $P(s_t|y_t)=\mathbbm{1}\{{s_t=y_t}\}$. Notably, when $\epsilon=0$, we obtain a POST channel instance. In this case, the capacity is $\mathsf{C_{NOST}}(0)=\log_2(5/4)$, precisely matching the capacity outlined in \cite{POSTchannel} for this specific POST channel.
\end{remark}
\begin{proof}[Proof of Theorem \ref{th: NOST}]
The proof of Theorem \ref{th: NOST}, concerning the capacity of the NOST channel, consists of two parts. First, we derive a tight upper bound on its capacity. Following that, we provide a corresponding lower bound to demonstrate the tightness of the upper bound.
The upper bound is shown in this section to elucidate the process of deriving analytical upper bounds using our methodology. Conversely, the lower bound is detailed in Appendix \ref{app: NOST_LB} to maintain the flow of the presentation.

To derive the upper bound, we solve the Bellman equation in \eqref{eq:Bellman}. Consider the first-order Markov $Q$-graph in Fig. \ref{fig:1Markov}. For any state parameter $\epsilon\in[0,1]$, we define the following graph-based test distribution:
\begin{align*}
    T_{Y|Q}(0|0)=T_{Y|Q}(1|1)=a,
\end{align*}
where
\begin{align}\label{eq: opt_a}
    a = \frac{\epsilon^\epsilon(1+\bar{\epsilon})^{1+\bar{\epsilon}}}{\epsilon^\epsilon(1+\bar{\epsilon})^{1+\bar{\epsilon}} + \bar{\epsilon}^{\bar{\epsilon}}(1+\epsilon)^{1+\epsilon}}.
\end{align}

Since the NOST channel is a finite-memory state channel, the MDP state is defined as $z_t=(\beta_t,q_t)$, corresponding to the MDP formulation in Section \ref{sec: DP_Main}, where
\begin{align*}
    \beta_{t}=P(S_t|x_t) = 
    \begin{cases} [1-\epsilon,\epsilon], & x_t=0, \\
                  [\epsilon,1-\epsilon], & x_t=1.
    \end{cases}
\end{align*}

By iterating the value iteration algorithm, one can deduce a conjectured solution\footnote{Further reading on the numerical evaluation of MDPs can be found in \cite{Permuter06_trapdoor_submit,Sabag_BEC}.} for the value function $h(z)$ and $\rho^*$, which are required for the solution of the Bellman equation. Specifically, define
\begin{align}\label{eq:rho_nost}
    \rho^*=\frac{1}{2}\log_2\left(\frac{1}{4}\left(\frac{\epsilon}{1-a}\right)^{\epsilon}\left(\frac{1+\bar{\epsilon}}{a}\right)^{1+\bar{\epsilon}}\right).
\end{align}
Further, define the value function as follows:
\begin{align}\label{eq: NOST_h_func}
    h([1-\epsilon,\epsilon],1)&=h([\epsilon,1-\epsilon],2)=0 \nn\\
    h([1-\epsilon,\epsilon],2)&=h([\epsilon,1-\epsilon],1)=\log_2\left(\frac{\bar{a}a^\epsilon}{a\bar{a}^\epsilon}\right).
\end{align}

In the following, we show that $\rho^*$ in \eqref{eq:rho_nost} and $h(z)$ in \eqref{eq: NOST_h_func} solve the Bellman equation. We begin with computing the MDP operator at the state $(\beta=[1-\epsilon,\epsilon],q=1)$. Specifically, it is the maximum (on the right-hand side of the Bellman equation in \eqref{eq:Bellman}) between two terms:
\begin{align}\label{eq: NOST_bellman}
    x=0:&D\left(\left[1-0.5\epsilon,0.5\epsilon\right]\big{\|}\left[a,1-a\right] \right) + \left(1-\frac{\epsilon}{2}\right)\cdot h([1-\epsilon,\epsilon],1) + \frac{\epsilon}{2}\cdot h([\epsilon,1-\epsilon],2)\nn\\
    x=1:&D\left(\left[0.5(1-\epsilon),0.5(1+\epsilon)\right]\big{\|}\left[a,1-a\right] \right) + \frac{1-\epsilon}{2}\cdot h([1-\epsilon,\epsilon],1) + \frac{1+\epsilon}{2}\cdot h([\epsilon,1-\epsilon],2).
\end{align}
After substitution, we can express \eqref{eq: NOST_bellman} as
\begin{align}\label{eq: NOST_bellman2}
    x=0:&\;\;(1-0.5\epsilon)\log_2\left(\frac{1-0.5\epsilon}{a}\right)+0.5\epsilon\log_2\left(\frac{\epsilon}{2\bar{a}}\right)
    \nn\\
    x=1:&\;\;\frac{\bar{\epsilon}}{2}\log_2\left(\frac{\bar{\epsilon}}{2a}\right)+\frac{1+\epsilon}{2}\log_2\left(\frac{1+\epsilon}{2\bar{a}}\right).
\end{align}
By substituting $a$ into \eqref{eq: NOST_bellman2} and comparing the expressions for $x=0$ and $x=1$, it can be shown that both actions yield the same result. Furthermore, it can be directly verified that \eqref{eq: NOST_bellman2} equals $\rho^*$ by comparing $\rho^*$ with the expression for $x=0$. On the other hand, we note that the left-hand side of the Bellman equation is also $\rho^*+h([1-\epsilon,\epsilon],1)=\rho^*$. Accordingly, the Bellman equation holds for the MDP state $z=([1-\epsilon,\epsilon],1)$. The verification for the remaining MDP states can be conducted similarly.
\end{proof}

\subsection{The Noisy Ising channel} 
The Ising channel, introduced by Berger and Bonomi in 1990 \cite{Berger90IsingChannel}, models a channel with intersymbol interference. The channel is defined as follows:
\begin{align}\label{eq: Ising_channel_def}
    Y_t = \begin{cases}
        X_t & \textit{, w.p. 0.5},\\
        X_{t-1} &\textit{, w.p. 0.5}.
    \end{cases}
\end{align}
The channel state $s_t$ is equal to the channel input $x_t$, making this channel a unifilar FSC. The feedback capacity of this channel has been determined in \cite{Ising_channel}. Further, in a recent advancement to RL algorithms \cite{aharoni2022feedback}, the capacity of the Ising channel with alphabet size $|\mathcal{X}|\le 8$ has been established. The duality bound, derived in the current paper, is the main tool used to prove the converse.

We study here the N-Ising channel, a generalized version of the Ising channel where the channel state is stochastically determined by the last input. Specifically, the channel is given by Eq. \eqref{eq: Ising_channel_def}, but the new channel state is randomly chosen according to a distribution $P(s_i|x_i)$ resembling a BSC topology with a transition probability $\epsilon$. We refer to these channels as the Noisy Ising (N-Ising) channels. Note that the N-Ising channel is not a unifilar FSC, but it is a finite-memory state channel. In the following theorem, we provide a simple upper bound on its capacity.

\begin{figure}[t]
\centering
    \includegraphics[scale = 0.6]{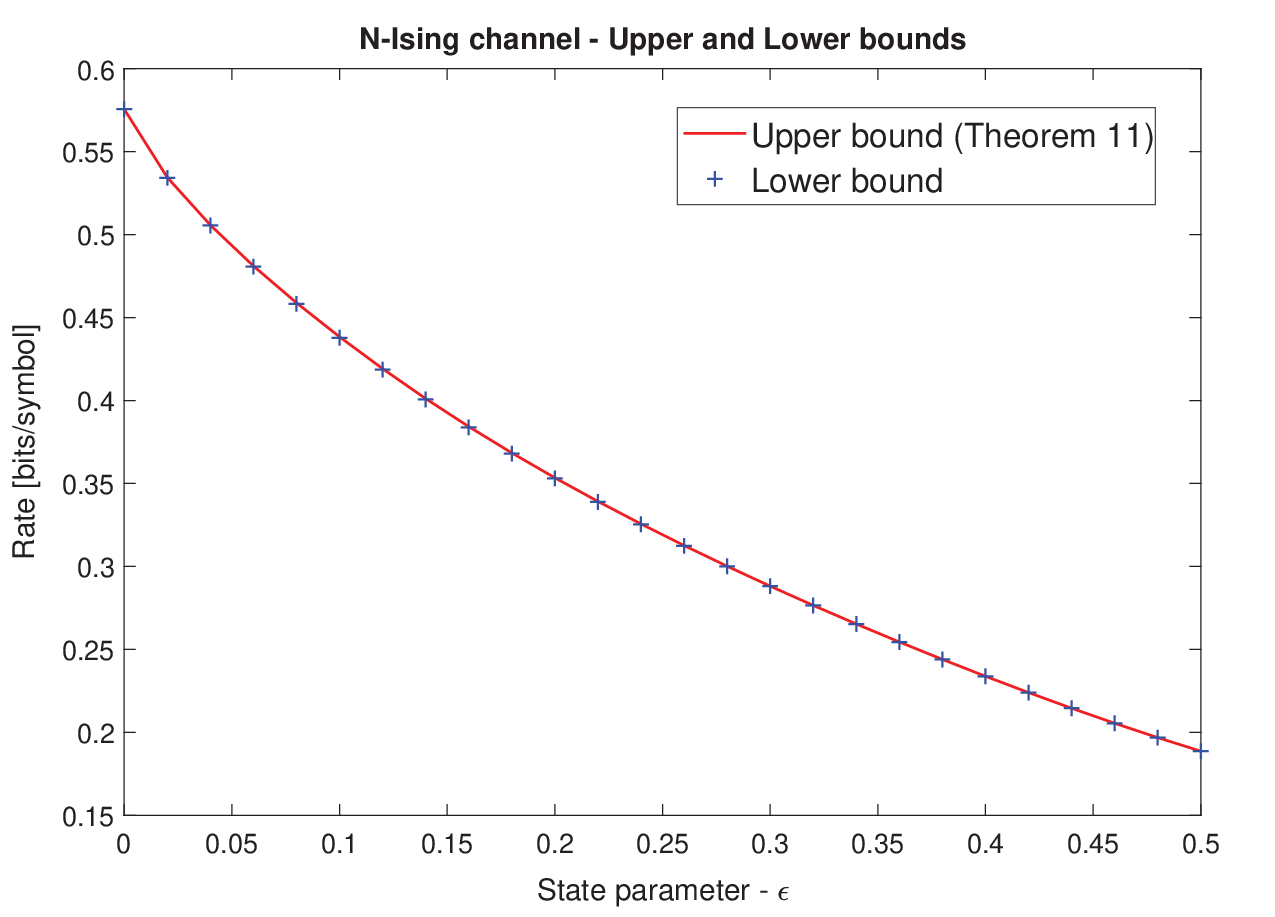}
    \caption{Bounds on the capacity of the N-Ising channel. The upper bound is obtained by a single $Q$-graph of size $4$ (Theorem \ref{Th: N-Ising_Q4}). The lower bound is evaluated with two different $Q$-graphs of size $10$ and $12$.}
    \label{fig:N-Ising_bounds}
\end{figure}
\begin{theorem}\label{Th: N-Ising}
For any $\epsilon\in[0,0.5)$, the capacity of the N-Ising channel is upper bounded by 
        \begin{align*} 
            \mathsf{C_{N-Ising}}(\epsilon)\leq  \frac{1}{2+4\bar{\epsilon}}\cdot\log_2\left(\frac{16^\epsilon\epsilon^{\epsilon\bar{\epsilon}}\left(\bar{\epsilon}(1+\bar{\epsilon})\right)^{\epsilon^2-3\epsilon+2}}{64\bar{a}^2(a\bar{a})^{2\bar{\epsilon}}(1+\epsilon)^{\epsilon^2-\epsilon-2}}\right),
        \end{align*}
        where
        \begin{align*}
            a = \frac{\epsilon^{\epsilon\gamma}}{\epsilon^{\epsilon\gamma}+(\bar{\epsilon}^{\bar{\epsilon}}(1+\bar{\epsilon})^{\epsilon-2})^\gamma (1+\epsilon)^{\gamma(1+\epsilon)}},\;\; \gamma=\frac{1}{2\epsilon^2-3\epsilon+2}.
        \end{align*}
\end{theorem}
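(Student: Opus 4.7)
The plan is to deploy the MDP formulation of Theorem~\ref{th: Main_mdp} for a carefully chosen $Q$-graph and graph-based test distribution, following the strategy used to establish the NOST capacity in Theorem~\ref{th: NOST}. Since the N-Ising channel has state $S_t$ produced as a BSC$(\epsilon)$ output of $X_t$, it is a finite-memory state channel with $k_1 = 1$, $k_2 = 0$, so by Section~\ref{sec:simplified_mdp} the belief vector $\beta_{t-1} = P(S_{t-1}|x^{t-1},y^{t-1})$ depends only on $x_{t-1}$ and therefore lives in the two-point set $\{[\bar{\epsilon},\epsilon],[\epsilon,\bar{\epsilon}]\}$. I would adopt the first-order Markov $Q$-graph of Fig.~\ref{fig:1Markov} and the graph-based test distribution $T_{Y|Q}(0|0) = T_{Y|Q}(1|1) = a$ with $a$ given by the formula in the theorem statement, yielding a finite MDP with four-element state space $\{[\bar{\epsilon},\epsilon],[\epsilon,\bar{\epsilon}]\} \times \mathcal{Q}$, action space $\mathcal{X} = \{0,1\}$, and disturbance space $\mathcal{Y} = \{0,1\}$.

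Next I would compute the one-step output distribution $\sum_{s_{t-1}} \beta_{t-1}(s_{t-1}) P_{Y|X,S}(\cdot|x_t,s_{t-1})$ for every (belief, action) pair. Using $P(Y_t = y | X_t = x, S_{t-1} = s) = \tfrac12 \mathbbm{1}\{y = x\} + \tfrac12 \mathbbm{1}\{y = s\}$ from \eqref{eq: Ising_channel_def}, these evaluate to distributions of the form $(1 - \tfrac{\epsilon}{2}, \tfrac{\epsilon}{2})$ when the current input agrees with the preceding input and $(\tfrac{\bar\epsilon}{2}, \tfrac{1+\epsilon}{2})$ when it disagrees (and their bit-flipped counterparts). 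Plugging these into the reward of Table~\ref{table: main1} gives KL divergences against $(a,\bar a)$ or $(\bar a,a)$ determined by $q_{t-1}$. The transitions are simple: $\beta_t = [\bar\epsilon,\epsilon]$ if $x_t = 0$ and $[\epsilon,\bar\epsilon]$ if $x_t = 1$, while $q_t = \phi(q_{t-1},y_t)$ is the labeled-edge update.

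To solve the Bellman equation in \eqref{eq:Bellman}, I would first run value iteration numerically on this four-state MDP to conjecture $\rho^*$ and a bounded value function $h$ which, by the bit-flip symmetry of the channel, test distribution, and graph, is expected to take at most two distinct values (analogous to \eqref{eq: NOST_h_func}). The candidate $\rho^*$ is taken to be the RHS of the theorem, while the value of $a$ is exactly the one for which the two actions $x_t \in \{0,1\}$ tie at the critical MDP states; equating the two branches and solving yields the closed-form formula for $a$ featuring the exponent $\gamma = 1/(2\epsilon^2 - 3\epsilon + 2)$. Verification then consists of substituting $h$ and $\rho^*$ into the Bellman operator at each of the four states, expanding the KL divergences, and checking that the maximum of the right-hand side over $x_t$ equals $\rho^* + h(z)$ by direct algebraic manipulation, as was done in \eqref{eq: NOST_bellman}--\eqref{eq: NOST_bellman2} for the NOST case.

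The principal obstacle is the algebraic verification. Compared to the NOST proof, the N-Ising reward is richer: it mixes both the current input and the noisy preceding input through the intersymbol interference of \eqref{eq: Ising_channel_def}, producing KL terms containing all four atoms $1-\tfrac{\epsilon}{2}$, $\tfrac{\epsilon}{2}$, $\tfrac{\bar\epsilon}{2}$, $\tfrac{1+\epsilon}{2}$. Collapsing these to the compact closed form of the theorem requires careful bookkeeping of the exponents of $\epsilon$, $\bar\epsilon$, $(1+\epsilon)$, $(1+\bar\epsilon)$, $a$, and $\bar a$; the polynomial exponents $\epsilon\bar\epsilon$, $\epsilon^2 - 3\epsilon + 2 = \bar\epsilon(2-\epsilon)$, and $\epsilon^2-\epsilon-2 = -(2-\epsilon)(1+\epsilon)$ that appear in the bound strongly suggest that the simplification is driven by substituting the closed form of $a$ and using $\gamma = 1/(2\epsilon^2-3\epsilon+2)$ to clear exponents. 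The prefactor $1/(2+4\bar\epsilon)$ should then emerge after averaging the tie-value across the four MDP states with weights given by the stationary distribution of the optimal policy.
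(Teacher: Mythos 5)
Your plan is exactly the paper's proof: the paper also uses the first-order Markov $Q$-graph of Fig.~\ref{fig:1Markov}, the symmetric test distribution $T_{Y|Q}$ parametrized by $a$, the two-point belief set $\{[\bar\epsilon,\epsilon],[\epsilon,\bar\epsilon]\}$, and then exhibits a candidate $\rho^*$ equal to the theorem's RHS together with a value function $h$ taking two distinct values (zero on two states, a single logarithmic expression on the other two), declaring that the Bellman verification proceeds as in Theorem~\ref{th: NOST}. One small caveat on your heuristics: the paper's optimal policy $u^*(\beta,q)$ is a strict deterministic map (action $x=1$ when $q=1$, $x=0$ when $q=2$), so the value of $a$ is not characterized by a tie between actions at the visited states, and the prefactor $1/(2+4\bar\epsilon)$ falls out of solving the Bellman linear system directly rather than by averaging against a stationary distribution; neither point affects the validity of the overall method, only the narrative for where the closed forms come from.
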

The proof of Theorem \ref{Th: N-Ising} is provided in Appendix \ref{app: N-Ising}. The upper bound is derived using a first-order Markov $Q$-graph. Transforming the N-Ising channel into a new unifilar FSC, as outlined in Section \ref{subsection: FC_extention}, yields a symmetric channel. The inherent symmetry of the channel leads to the observation that, for any $\epsilon\in(0.5,1]$, we have $\mathsf{C_{N-Ising}}(\epsilon) = \mathsf{C_{N-Ising}}(1-\epsilon)$. Additionally, for $\epsilon=0.5$, the transformed channel is a BSC with parameter $0.25$ for any channel state. Thus, for $\epsilon=0.5$, the capacity is $\mathsf{C_{N-Ising}}(0.5)=1-H_2(0.25)$.

In the following theorem, we present an additional upper bound that is at least as good as the upper bound in Theorem \ref{Th: N-Ising}, and is tighter in the case of small values of the state parameter $\epsilon$. The difference between the upper bounds in Theorems \ref{Th: N-Ising} and \ref{Th: N-Ising_Q4} has an order of $\sim 10^{-2}$. Considering the simplicity of the expression and the absence of optimization procedures in Theorem \ref{Th: N-Ising}, we present it as well.

\begin{theorem}\label{Th: N-Ising_Q4}
For any $\epsilon\in[0,0.5)$, the capacity of the N-Ising channel is upper bounded by
\begin{align*} 
        \mathsf{C_{N-Ising}}(\epsilon)\leq  \min\frac{1}{2(1+2\bar{\epsilon})(2+\epsilon)}\cdot\log_2\left(\frac{2^{4\epsilon^2+2\epsilon-12}\bar{a}^{4\epsilon^2-6\epsilon-4}(\epsilon^2-3\epsilon+2)^{\epsilon^3-\epsilon^2-4\epsilon+4}}{a^{4\bar{\epsilon}}b^{2\epsilon\bar{\epsilon}}(1+\epsilon)^{\epsilon^3+\epsilon^2-4\epsilon-4}\epsilon^{\epsilon^3+\epsilon^2-2\epsilon}\bar{b}^{2\epsilon^2-6\epsilon+4}}\right), 
\end{align*}
where the minimum is over all $(a,b)\in[0,1]^2$ that satisfy:
\begin{align} 
         1 \leq \frac{b^{7\epsilon^3-2\epsilon^4+4\epsilon^2-18\epsilon+12}\bar{b}^{2\epsilon^4-7\epsilon^3+16\epsilon-8}\epsilon^{\epsilon^3-2\epsilon^2-8\epsilon}\bar{\epsilon}^{\epsilon^3-3\epsilon^2-6\epsilon+8}}{a^{2\epsilon-4\epsilon^2-4}\bar{a}^{8\epsilon^2-4\epsilon+8}(1+\epsilon)^{\epsilon^3-\epsilon^2-10\epsilon-8}(2-\epsilon)^{\epsilon^3-4\epsilon^2-4\epsilon+16}}. \nn
\end{align}
\end{theorem}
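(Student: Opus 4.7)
The plan is to mirror the strategy underlying Theorem~\ref{Th: N-Ising}, but upgrade from a first-order Markov $Q$-graph to a $Q$-graph on four nodes and from a one-parameter to a two-parameter test distribution. First, I would fix the $Q$-graph of size four (most naturally, the second-order Markov graph on $\mathcal{Y}^2=\{00,01,10,11\}$, or a $4$-node variant extracted via the RL-based exploration in Section~\ref{subsection: FC_RL}), and specify a graph-based test distribution $T_{Y|Q}$ parameterized by $(a,b)\in[0,1]^2$, enforcing the $0\leftrightarrow 1$ symmetry so that the eight conditionals $T(y|q)$ collapse to $\{a,\bar a,b,\bar b\}$. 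By Theorem~\ref{th: Main_mdp} together with the simplification of Section~\ref{sec:simplified_mdp}, this induces a finite MDP: since the N-Ising channel is finite-memory with $\beta_{t-1}=P(S_{t-1}\mid x_{t-1})$ taking only two values, the state space has cardinality $2\times 4=8$, while the action and disturbance spaces are binary.

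Second, following Remark~\ref{remark: bellman_sol}, I would run the value-iteration algorithm numerically to obtain a conjectured greedy policy and to guess a closed-form value function $h$ and scalar $\rho^*$. Using the $0\leftrightarrow 1$ symmetry, the eight values of $h$ collapse to a small number of independent unknowns, determined by solving the linear system obtained from plugging the greedy policy into the Bellman equation~\eqref{eq:Bellman}. This yields $\rho^*$ as a function of $(a,b)$ and $\epsilon$, matching the scalar whose $\log_2$ appears in the theorem after the exponents are rearranged according to the stationary distribution of the $Q$-graph under the chosen policy.

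Third, I would verify that $(\rho^*,h)$ solves the Bellman equation at each of the eight MDP states. At state $(\beta,q)$, the reward~\eqref{reward_1} is an explicit KL divergence from a mixture (indexed by $x$) of N-Ising channel-output distributions to $T_{Y|Q}(\cdot\mid q)$, and the next state $(\beta',q')$ is determined by the channel kernel together with the $Q$-graph transition function. Requiring the greedy action to attain the maximum on the right-hand side at every state pins down $\rho^*$; the requirement that the alternative action not strictly exceed $\rho^*+h(z)$ yields a collection of inequalities that, after cancellation and consolidation, reduce to the single constraint displayed in the theorem. Minimizing the resulting expression over the feasible set of $(a,b)$ then produces the tightest upper bound of this form. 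The main obstacle I anticipate is purely algebraic: tracking the two-parameter, rational-exponent expressions in $\epsilon$ across all eight states is heavy, and isolating the single binding inequality among the up-to-eight candidate action-comparisons—and verifying that the remaining ones follow from it throughout the regime $\epsilon\in[0,0.5)$—is where the real bookkeeping sits. Once that reduction is carried out, the stated formula for $\rho^*$ and its accompanying constraint fall out by direct substitution.
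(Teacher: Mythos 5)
Your overall strategy matches the paper's proof exactly: fix a four-node $Q$-graph, take a two-parameter graph-based test distribution $T(y=0|\underline{q}) = [a,b,1-a,1-b]$ exploiting the channel's $0\leftrightarrow1$ symmetry, cast the dual upper bound as the finite MDP of Theorem~\ref{th: Main_mdp} (with the $8$-element state space $\{[1-\epsilon,\epsilon],[\epsilon,1-\epsilon]\}\times\{1,2,3,4\}$), conjecture $(\rho^*,h)$ from value iteration, and verify the Bellman equation~\eqref{eq:Bellman}. Your reading of the $(a,b)$-constraint as the requirement that the non-greedy action not dominate in the Bellman maximum is a sensible account of where the feasibility region comes from; the paper itself omits this verification and only records $\rho^*$, the value function, and the policy.

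There is, however, one concrete gap: the result depends on the \emph{specific} four-node $Q$-graph, and your primary candidate — the second-order Markov graph on $\mathcal{Y}^2$, which has evolution vectors $\underline{\phi}(\underline{q},y=0)=[1,3,1,3]$ and $\underline{\phi}(\underline{q},y=1)=[2,4,2,4]$ — is \emph{not} the one used. The paper's proof uses the graph with $\underline{\phi}(\underline{q},y=0)=[2,1,1,1]$ and $\underline{\phi}(\underline{q},y=1)=[3,3,4,3]$, which was found via the RL-based exploration of Section~\ref{subsection: FC_RL}. Every rational-exponent expression in the theorem statement, including the constraint, is an artifact of that graph's topology together with the induced greedy policy $u^*(\beta,q)$ and stationary node distribution. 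You do flag the RL-extracted alternative, so the landscape of options is correctly identified, but as written the proposal cannot reproduce the stated closed forms without first pinning down this particular graph — and if you instead ran the program with the second-order Markov graph you would obtain a different (and, empirically, a weaker) bound.
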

The proof of Theorem \ref{Th: N-Ising_Q4} is shown in Appendix \ref{app: N-Ising}. This upper bound is obtained by using a unique $Q$-graph of size $4$, that is given within the proof of the theorem.

In Fig. \ref{fig:N-Ising_bounds}, we present a numerical evaluation of upper and lower bounds on the capacity of the N-Ising channel as a function of the state parameter $\epsilon$. To evaluate the performance of the upper bounds, Fig. \ref{fig:N-Ising_bounds} compare the upper bound in Theorem \ref{Th: N-Ising_Q4} with a lower bound obtained using the $Q$-graph lower bound \cite{OronBasharfeedback}. As explained in Section \ref{subsection: FC_Qbounds}, this lower bound is applicable only to unifilar FSCs. Therefore, we first reformulate the N-Ising channel as a new unifilar FSC, as described in Section \ref{subsection: FC_extention}, and then apply the $Q$-graph lower bound on the reformulated channel. 

We evaluated the lower bound using two different $Q$-graphs of sizes $10$ and $12$, which were obtained through the RL methodology described in Section \ref{subsection: FC_RL}. In particular, to illustrate this, Fig. \ref{fig:N-Ising_Qgraph} presents two histograms, each corresponding to a different value of the state parameter. These histograms depict the MDP states visited under an estimated optimal policy learned by RL. Following this, we perform a quantization process to extract finite-sized $Q$-graphs. It is worth noting that this quantization may lead to sub-optimal $Q$-graphs, which may not result in a tight bound. However, in many cases, it offers insights into the optimal $Q$-graph or leads to a $Q$-graph that provides a reasonably good bound. In the case of the N-Ising channel, as can be seen in Fig. \ref{fig:N-Ising_bounds}, the RL algorithm indeed converged to near-optimal $Q$-graphs that provide an almost tight lower bound. The difference between the upper and lower bounds is negligible, with an order of at most $\sim 10^{-5}$.

\begin{figure}[t]
\centering
    \includegraphics[scale = 0.35]{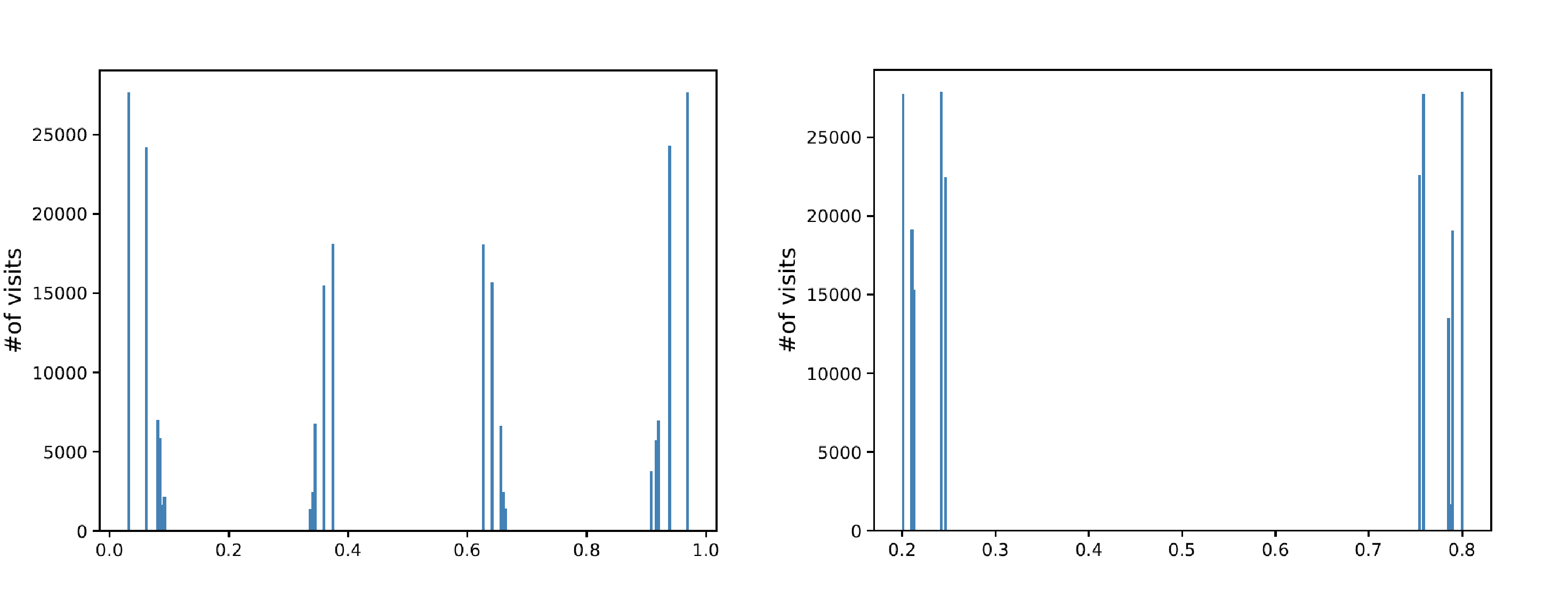}
    \caption{Histograms of the MDP states visited under an estimated optimal policy learned by RL in the case of the N-Ising channel. The left figure presents the histogram of MDP states when the channel state parameter is $\epsilon=0.1$, while the right figure presents the histogram when the channel state parameter is $\epsilon=0.4$.}
    \label{fig:N-Ising_Qgraph}
\end{figure}

\section{Conclusions}\label{sec:conclusion}
In this paper, we presented a novel approach for deriving upper bounds on the feedback capacity of FSCs. Our upper bounds leverage an extension of the duality upper bound from mutual information to the case of directed information. The bounds are expressed as a function of a test distribution, and we propose the use of graph-based test distributions as an extension to the commonly used Markov test distributions. For a fixed graph-based test distribution, we demonstrate that the upper bounds can be formulated as a MDP problem, making them computationally tractable. Moreover, in the cases of unifilar FSCs and finite-memory state channels, the MDP formulation exhibits finite states, actions, and disturbances. As a result, for such channels, our methodology is capable of handling channels with large alphabets, and deriving analytical upper bounds becomes relatively straightforward through the solution of the associated Bellman equation. Additionally, we have introduced several exploration methodologies of $Q$-graphs and evaluated their effectiveness on well-known channels, yielding either tight or near-optimal bounds on their capacity.

\begin{appendices}
\section{Derivation of the upper bound --- Proof of Theorem \ref{Th: DUB_Q}} \label{app: DUB_Q}
In this section, we prove Theorem \ref{Th: DUB_Q}. We simplify \eqref{eq: DB_DI} for the case where the test distribution is defined on a $Q$-graph. For convenience, we omit the dependence on the initial pair $(s_0,q_0)$. For a fixed $(s_0,q_0)$, consider the following chain of inequalities:
\begin{align}\label{eq: DUB_Q}
	&D(P_{Y^n\|X^n=x^n}\|T_{Y^n}) \nn\\&= \sum_{y^n} P(y^n\|x^n)\log_2\left(\frac{P(y^n\|x^n)}{T_{Y^n}(y^n)}\right)\nn
	\\&=  \sum_{y^n} P(y^n\|x^n)\log_2\left(\prod_{i=1}^n\frac{P(y_i|y^{i-1},x^i)}{T(y_i|y^{i-1})}\right)\nn
	\\&\stackrel{(a)}= \sum_{i=1}^n \sum_{y^n} P(y^n\|x^n)\log_2\left(\frac{P(y_i|y^{i-1},x^i)}{T(y_i|y^{i-1})}\right)\nn
	\\&\stackrel{(b)}= \sum_{i=1}^n \sum_{y^i}\sum_{q_{i-1}} P(y^i\|x^i)\log_2\left(\frac{P(y_i|y^{i-1},x^i)}{T_{Y|Q}(y_i|q_{i-1})}\right)\cdot\mathbbm{1}{\{q_{i-1}=\Phi(y^{i-1})\}}\nn
    \\&= \sum_{i=1}^n \sum_{y^{i-1}}\sum_{q_{i-1}} P(y^{i-1}\|x^{i-1})\mathbbm{1}{\{q_{i-1}=\Phi(y^{i-1})\}}\sum_{y_i}P(y_i|y^{i-1},x^i)\log_2\left(\frac{P(y_i|y^{i-1},x^i)}{T_{Y|Q}(y_i|q_{i-1})}\right)\nn
	\\&\stackrel{(c)}= \sum_{i=1}^n \sum_{y^{i-1}}\sum_{q_{i-1}} P(y^{i-1}\|x^{i-1})\mathbbm{1}{\{q_{i-1}=\Phi(y^{i-1})\}}\cdot D\left(P_{Y_i|Y^{i-1},X^i}(\cdot|y^{i-1},x^i)\;\|\;T(\cdot|q_{i-1})\right)\nn	
	\\&\le \max_{f(x^n\|y^{n-1})}\sum_{i=1}^n\mathbb{E}\left[ D\left(P(\cdot|Y^{i-1},x^i)\;\|\;T\left(\cdot|Q_{i-1}\right)\right)\right]\nn
	\\&\stackrel{(d)}=\max_{f(x^n\|y^{n-1})}\sum_{i=1}^n\mathbb{E}\left[ D\left(\sum_{s_{i-1}}P(s_{i-1}|Y^{i-1},x^{i-1})\cdot P_{Y|X,S}(\cdot|x_i,s_{i-1})\;\Big{\|}\;T\left(\cdot|Q_{i-1}\right)\right)\right],
\end{align}
where $(a)$ follows by exchanging the order of summation, $(b)$ follows by  marginalizing over $y_{i+1}^n$ and the fact that $q_{i-1}$ is a function of $y^{i-1}$, $(c)$ follows by identifying the relative entropy, and $(d)$ follows by the Markov chain $S_{i-1}-(X^{i-1},Y^{i-1},S_0)-X_i$ and the channel law.

By dividing the term in \eqref{eq: DUB_Q} by $n$, minimizing over $(s_0,q_0)$, and taking the limit we obtain
\begin{align}
    \mathsf{C_{fb}} &\le \lim_{n\to\infty}\min_{s_0,q_0}\max_{f(x^n\|y^{n-1})}\frac{1}{n}\sum_{i=1}^n\mathbb{E}\left[D\left(\sum_{s_{i-1}}P(s_{i-1}|Y^{i-1},x^{i-1},s_0)P(\cdot|x_i,s_{i-1})\Big{\|}T\left(\cdot|Q_{i-1}\right)\right)\right]\label{eq: main_ub}\\&= \lim_{n\to\infty}\max_{f(x^n\|y^{n-1},s_0)}\min_{s_0,q_0}\frac{1}{n}\sum_{i=1}^n\mathbb{E}\left[D\left(\sum_{s_{i-1}}P(s_{i-1}|Y^{i-1},x^{i-1},s_0)P(\cdot|x_i,s_{i-1})\Big{\|}T\left(\cdot|Q_{i-1}\right)\right)\right],\label{eq: main_ub2}
\end{align}
where the existence of the limit is shown next.

\subsection{Existence of the limit in \eqref{eq: main_ub}}\label{app:subsec_limexists}
In this section, we prove the existence of the limit in \eqref{eq: main_ub}, consequently implying the existence of the limit in \eqref{eq: main_ub2}. We first introduce a technical lemma that will be utilized in the proof. The proof of the lemma is provided in Appendix \ref{app: lemmas}.
\begin{lemma}\label{lemma: conditional_diff}
    For any FSC $P_{S^+,Y|X,S}$ and a test distribution $T_{Y^n}$, and for any $x^n,y^m,s_0$ s.t. $n> m\ge 1$, the following bound holds:
    \begin{align}\label{eq: diff_kl}
     \Big{|}\infdiv[\Big]{P_{Y_{m+1}^n\|x_{m+1}^n|x^m,y^m,S_m,s_0}}{T_{Y_{m+1}^n|y^m,q_0}\Big{|}P_{S_m|x^{m+1},y^m,s_0}}&\nn\\-\infdiv[\Big]{P_{Y_{m+1}^n\|x_{m+1}^n|x^m,y^m,s_0}}{T_{Y_{m+1}^n|y^m,q_0}}&\Big{|}\leq
    \log_2(|\mathcal{S}|).
    \end{align}
    That is, the divergence is changed by at most $\log_2(|\mathcal{S}|)$ when conditioning over $S_m$.
\end{lemma}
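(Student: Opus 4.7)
The plan is to realize the difference of the two divergences in \eqref{eq: diff_kl} as a mutual information between $S_m$ and $Y_{m+1}^n$ under a carefully chosen joint law, and then bound it by $H(S_m)\leq \log_2|\cS|$.

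For brevity, set $\alpha \triangleq P_{S_m|x^{m+1},y^m,s_0}$, $P_1(\cdot\,|\,s_m) \triangleq P_{Y_{m+1}^n\|x_{m+1}^n|x^m,y^m,s_m,s_0}$, and $T \triangleq T_{Y_{m+1}^n|y^m,q_0}$. Introduce on $\cS\times \cY^{n-m}$ the two joint laws
\begin{align*}
Q(s_m,y_{m+1}^n) \triangleq \alpha(s_m)\, P_1(y_{m+1}^n\,|\,s_m), \qquad R(s_m,y_{m+1}^n) \triangleq \alpha(s_m)\, T(y_{m+1}^n).
\end{align*}
A brief preliminary check using the causal structure of the FSC shows that $P_{S_m|x^{m+1},y^m,s_0}=P_{S_m|x^m,y^m,s_0}$, since the future input $x_{m+1}$ carries no information about $S_m$ beyond the past; consequently the $Y_{m+1}^n$-marginal of $Q$ equals $P_{Y_{m+1}^n\|x_{m+1}^n|x^m,y^m,s_0}$, i.e.\ the law appearing as the first argument in the second divergence of \eqref{eq: diff_kl}. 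By construction, $S_m$ and $Y_{m+1}^n$ are independent under $R$.

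Applying the chain rule of relative entropy to $D(Q\|R)$ in the two possible orderings of the coordinates yields the identity
\begin{align*}
D(P_1\,\|\,T\,|\,\alpha) = D(Q\|R) = D(Q_{Y_{m+1}^n}\,\|\,T) + I_Q(S_m;Y_{m+1}^n),
\end{align*}
and hence
\begin{align*}
0 \,\leq\, D(P_1\,\|\,T\,|\,\alpha) - D(Q_{Y_{m+1}^n}\,\|\,T) = I_Q(S_m;Y_{m+1}^n) \,\leq\, H_Q(S_m) = H(\alpha) \,\leq\, \log_2|\cS|,
\end{align*}
which is precisely \eqref{eq: diff_kl}; the non-negativity of the difference reflects the convexity of $D(\cdot\,\|\,T)$ in its first argument and explains why the absolute value in the statement is in fact redundant.

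The only non-routine step is the identification of $\alpha$ with the $S_m$-marginal of $Q$, which hinges on the causal structure of the FSC and on expanding $P(s_m\,|\,x^{m+1},y^m,s_0)$ via the channel recursion. After that, the argument reduces to a direct two-way application of the chain rule of relative entropy together with the uniform entropy bound $H(\cdot)\leq \log_2|\cS|$, and I do not foresee any further obstacle.
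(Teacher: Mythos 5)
Your proof is correct and follows essentially the same route as the paper: identify the difference of the two divergences as the mutual information $I_Q(S_m;Y_{m+1}^n)$ under an appropriately constructed joint law, and bound it by $H(\alpha)\le\log_2|\cS|$. Where the paper expands each relative entropy into cross-entropy and entropy pieces, invokes the triangle inequality, and then shows the cross-entropy pieces cancel while the entropy difference is a mutual information, you obtain the same identity more directly by a two-way chain rule applied to $D(Q\|R)$ with $R=\alpha\otimes T$; this neatly avoids the triangle inequality and makes it transparent (as you observe) that the quantity inside the absolute value is in fact non-negative, so the absolute value is redundant. That is a modest but genuine stylistic improvement.

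One caveat: the step you describe as ``a brief preliminary check'' is not quite as light as you suggest. You need two facts. First, $P_{S_m|x^{m+1},y^m,s_0}=P_{S_m|x^m,y^m,s_0}$, which follows from the Markov chain $X_{m+1}-(X^m,Y^m,S_0)-S_m$. Second, and more importantly, you need the marginalization identity
\begin{align*}
\sum_{s_m}P(s_m|x^{m+1},y^m,s_0)\,P(y_{m+1}^n\|x_{m+1}^n|x^m,y^m,s_m,s_0)=P(y_{m+1}^n\|x_{m+1}^n|x^m,y^m,s_0),
\end{align*}
which the paper isolates as a separate Lemma~\ref{lemma: marg} and proves via a factorization argument in causal-conditioning notation. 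Because the objects involved are causally conditioned distributions (not ordinary conditionals), this is not an immediate consequence of the law of total probability; the paper's proof needs the Markov chain $X_i-(X^{i-1},Y^{i-1},S_0)-S_m$ for every $i\ge m+1$, not just $i=m+1$, and a comparison of two factorizations of $P(x_{m+1}^n,y_{m+1}^n|x^m,y^m,s_0)$. Your proposal correctly identifies this as ``the only non-routine step,'' but you should supply the argument rather than declare it done. Once that identity is in hand, the rest of your argument is exactly as clean as you claim.
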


We proceed to show the existence of the limit. Let us define
\begin{align}\label{c_quant}
\mathsf{c}(x^n,s_0,q_0) \triangleq \frac{1}{n} D\left(P_{Y^n\|X^n=x^n,S_0=s_0}\|T_{Y^n|Q_0=q_0}\right).
\end{align}
In addition, we define $\underline{\mathsf{C}}_n$ as
\begin{align}\label{eq: c__under_limit}
\underline{\mathsf{C}}_n &\triangleq \min_{s_0,q_0}\max_{f(x^n\|y^{n-1})}\mathsf{c}(x^n,s_0,q_0) \nn\\ &=\frac{1}{n}\min_{s_0,q_0}\max_{f(x^n\|y^{n-1})} D\left(P_{Y^n\|X^n=x^n,S_0=s_0}\|T_{Y^n|Q_0=q_0}\right).
\end{align}
Our objective is to show the existence of the limit $\lim_{n\to\infty} \underline{\mathsf{C}}_n$, which in turn implies the existence of the limit in \eqref{eq: main_ub}. The idea of the proof is to show that the sequence $n\left(\underline{\mathsf{C}}_n-\frac{\log_2(|\mathcal{S}|)}{n}\right)$ is super-additive. Then, by Fekete's lemma \cite{Fekete1923}, the limit $\lim\limits_{n\to\infty} \underline{\mathsf{C}}_n$ exists, and is equal to $\sup_n \left[\underline{\mathsf{C}}_n-\frac{\log_2|\mathcal{S}|}{n}\right]$. Recall that a sequence $a_n$ is super-additive if it satisfies the inequality $a_{m+k}\geq a_m+a_k$ for any positive integers $m$ and $k$.

Consider two positive integers $m$ and $k$ such that $m+k = n$. Assume that $\hat{f}_1(x^m\|y^{m-1})$ and $\hat{f}_2(x^k\|y^{k-1})$ are the mapping functions that achieve the maximum for $\underline{\mathsf{C}}_m$ and $\underline{\mathsf{C}}_k$ in \eqref{eq: c__under_limit}, respectively. Define a new mapping $\hat{f}(x^n\|y^{n-1})$ as follows: 
\begin{align}
    \hat{f}(x^n\|y^{n-1}) = \hat{f}_1(x^m\|y^{m-1})\hat{f}_2(x_{m+1}^n\|y_{m+1}^{n-1}).
\end{align}
Accordingly, since $\hat{f}(x^n\|y^{n-1})$ is not necessarily the input mapping that achieves the maximum for $\underline{\mathsf{C}}_n$, then under the choice of $\hat{f}(x^n\|y^{n-1})$ we have
\begin{align} \label{eq: limit_step1}
    n\underline{\mathsf{C}}_n &\geq \min_{s_0,q_0} D\left(P_{Y^n\|x^n,s_0}\|T_{Y^n|q_0}\right) \nn \\
    &\stackrel{(a)}= \min_{s_0,q_0}\bigg[\infdiv[\big]{P_{Y^m\|x^m,s_0}}{T_{Y^m|q_0}} + \infdiv[\Big]{P_{Y_{m+1}^n\|x_{m+1}^n|Y^m,x^m,s_0}}{T_{Y_{m+1}^n|Y^m,q_0}\Big{|}P_{Y^m\|x^m,s_0}}\bigg]\nn\\
    &\stackrel{(b)}\geq \min_{s_0,q_0}\infdiv[\big]{P_{Y^m\|x^m,s_0}}{T_{Y^m|q_0}} + \min_{s_0,q_0}\infdiv[\Big]{P_{Y_{m+1}^n\|x_{m+1}^n|Y^m,x^m,s_0}}{T_{Y_{m+1}^n|Y^m,q_0}\Big{|}P_{Y^m\|x^m,s_0}}\nn\\
    &= m\underline{\mathsf{C}}_m + \min_{s_0,q_0}\infdiv[\Big]{P_{Y_{m+1}^n\|x_{m+1}^n|Y^m,x^m,s_0}}{T_{Y_{m+1}^n|Y^m,q_0}\Big{|}P_{Y^m\|x^m,s_0}},
\end{align}
where $(a)$ follows by the chain rule for relative entropy, $(b)$ follows from $\min_t\left[f(t)+g(t)\right]\geq \min_t f(t)+\min_t g(t)$.

We show now that the second term in \eqref{eq: limit_step1} is at least $k\underline{\mathsf{C}}_k$. For any initial pair $(s_0,q_0)$ we have
\begin{align} \label{eq: limit_step2}
    &\infdiv[\Big]{P_{Y_{m+1}^n\|x_{m+1}^n|Y^m,x^m,s_0}}{T_{Y_{m+1}^n|Y^m,q_0}\Big{|}P_{Y^m\|x^m,s_0}}\nn\\
    &\stackrel{(a)}\geq \infdiv[\Big]{P_{Y_{m+1}^n\|x_{m+1}^n|Y^m,x^m,S_m,s_0}}{T_{Y_{m+1}^n|Y^m,q_0}\Big{|}P_{Y^m\|x^m,s_0}P_{S_m|Y^m,x^{m+1},s_0}} - \log_2(|\mathcal{S}|)\nn\\
    &\stackrel{(b)}= \infdiv[\Big]{P_{Y_{m+1}^n\|x_{m+1}^n|Y^m,x^m,S_m,s_0}}{T_{Y_{m+1}^n|Q_m,Y^m,q_0}\Big{|}P_{Y^m\|x^m,s_0}P_{S_m,Q_m|Y^m,x^{m+1},s_0,q_0}} - \log_2(|\mathcal{S}|)\nn\\
    &\stackrel{(c)}= \infdiv[\Big]{P_{Y_{m+1}^n\|x_{m+1}^n,S_m}}{T_{Y_{m+1}^n|Q_m}\Big{|}P_{Y^m\|x^m,s_0}P_{S_m,Q_m|Y^m,x^{m+1},s_0,q_0}} - \log_2(|\mathcal{S}|)\nn\\
    &= \sum_{y^m}P(y^m\|x^m,s_0)\sum_{s_m,q_m}P(s_m,q_m|y^m,x^{m+1},s_0,q_0)\infdiv[\Big]{P_{Y_{m+1}^n\|x_{m+1}^n,s_m}}{T_{Y_{m+1}^n|q_m}} - \log_2(|\mathcal{S}|)\nn\\
    &\ge\sum_{y^m}P(y^m\|x^m,s_0)\min_{s_m,q_m}\infdiv[\Big]{P_{Y_{m+1}^n\|x_{m+1}^n,s_m}}{T_{Y_{m+1}^n|q_m}} - \log_2(|\mathcal{S}|)\nn\\
    &= \min_{s_m,q_m}\infdiv[\Big]{P_{Y_{m+1}^n\|x_{m+1}^n,s_m}}{T_{Y_{m+1}^n|q_m}} - \log_2(|\mathcal{S}|)\nn\\
    &=  k\underline{\mathsf{C}}_k - \log_2(|\mathcal{S}|).
\end{align}
where $(a)$ follows since the relative entropy term is changed by at most $\log_2(|\mathcal{S}|)$ when conditioning on $S_m$ (see Lemma \ref{lemma: conditional_diff}), $(b)$ follows because $Q_m=\Phi(Y^m)$, and $(c)$ follows due to the Markov chain $Y_{m+1}^n-(S_m,X_{m+1}^n)-(X^m,Y^m,S_0)$ and since $T(y_{m+1}^n|q_m,y^m,q_0)=T(y_{m+1}^n|q_m)$.

Hence, from \eqref{eq: limit_step1} and \eqref{eq: limit_step2}, we observe that
\begin{align*}
    n\underline{\mathsf{C}}_n \geq m\underline{\mathsf{C}}_m + k\underline{\mathsf{C}}_k - \log_2(|\mathcal{S}|).
\end{align*}
By rearranging the inequality, we obtain
\begin{align*}
    n\left[\underline{\mathsf{C}}_n-\frac{\log_2(|\mathcal{S}|)}{n}\right]\ge m\left[\underline{\mathsf{C}}_m-\frac{\log_2(|\mathcal{S}|)}{m}\right] + k\left[\underline{\mathsf{C}}_k-\frac{\log_2(|\mathcal{S}|)}{k}\right].
\end{align*}
Accordingly, we obtained that the sequence $n\left(\underline{\mathsf{C}}_n-\frac{\log_2(|\mathcal{S}|)}{n}\right)$ is super-additive which concludes the existence of the limit $\lim\limits_{n\to\infty} \underline{\mathsf{C}}_n$, as required.

\subsection{Proof of Lemma \ref{lemma: conditional_diff}}\label{app: lemmas}
In the following, we introduce a new random vector $\tilde{S}^n$ defined as: $\tilde{S}_{m+1}=S_m$ and $\tilde{S}_i=0$ for $i\neq m+1$. We begin by establishing a preliminary lemma that will be used for the proof of Lemma \ref{lemma: conditional_diff}.
\begin{lemma}\label{lemma: marg}
For any FSC, and any $x^n,y^n,s_0$, and $n>m\ge 1$, the following equality holds:
\begin{align}\label{eq: marg}
    \sum_{s_m}P(s_m|x^{m+1},y^m,s_0)P(y_{m+1}^n\|x_{m+1}^n|y^m,x^m,s_m,s_0) = P(y_{m+1}^n\|x_{m+1}^n|x^m,y^m,s_0).
    \end{align}
\end{lemma}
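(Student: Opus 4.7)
\textbf{Plan for proof of Lemma \ref{lemma: marg}.} The identity is essentially the law of total probability for the joint distribution of $(S_m, Y_{m+1}^n)$ given the observed history, dressed up in causal conditioning notation. The plan is therefore to unfold the causal conditioning into products of one-step kernels, strip off any dependence on $s_0$ using the FSC Markov property, and then recognize the sum over $s_m$ as a standard marginalization.

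First, by the definition of causal conditioning given immediately before \eqref{eq:det_causal}, I would write
\begin{equation*}
    P(y_{m+1}^n\|x_{m+1}^n|y^m,x^m,s_m,s_0) = \prod_{i=m+1}^n P(y_i|y^{i-1},x^i,s_m,s_0),
\end{equation*}
and similarly the target on the right-hand side of \eqref{eq: marg} as $\prod_{i=m+1}^n P(y_i|y^{i-1},x^i,s_0)$. Next, the FSC Markov property \eqref{eq:FSC} lets me marginalize the intermediate states $s_{m+1},\dots,s_{i-1}$ using only the kernel $P_{S^+,Y|X,S}$; consequently each factor $P(y_i|y^{i-1},x^i,s_m,s_0)$ is independent of $s_0$ once $s_m$ is given. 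Combined with the chain rule, this identifies the summand on the left-hand side of \eqref{eq: marg} with the joint $P(s_m, y_{m+1}^n \mid x^n, y^m, s_0)$, and summing over $s_m$ then yields $P(y_{m+1}^n \mid x^n, y^m, s_0)$, which by the causal conditioning definition matches the right-hand side.

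The main obstacle is bookkeeping: I have to carefully keep causal conditioning (a product of explicit kernels) separate from ordinary conditional probabilities, and verify that $P(s_m|x^{m+1},y^m,s_0)$ is the correct marginal, i.e.\ that enlarging the conditioning to include the later inputs $x_{m+2}^n$ would not change it. I would address this either by induction on $n-m$ (the base case $n=m+1$ reduces to a direct application of total probability, and the inductive step removes one more factor by using the FSC kernel at step $n$), or by noting once and for all that the causal conditioning machinery is designed precisely so this manipulation is legal under any causal input law.
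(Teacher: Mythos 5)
Your instinct that this is ``total probability in disguise'' is the right mental model, but the step where you identify the summand with the joint $P(s_m, y_{m+1}^n \mid x^n, y^m, s_0)$ and then marginalize has a real gap, which your last paragraph flags but does not actually close. Under a causal input law with feedback, causal conditioning is \emph{not} the same as ordinary conditioning on the full input trajectory: $P(y_{m+1}^n \mid x^n, y^m, s_0) \neq P(y_{m+1}^n \| x_{m+1}^n \mid x^m, y^m, s_0)$ in general, because $x_{m+2}^n$ carries information about $y_{m+1}^{n-1}$ through the feedback path, so conditioning on it in the ordinary sense changes the conditional law of $Y_{m+1}^n$; by the same mechanism $P(s_m|x^{m+1},y^m,s_0)\neq P(s_m|x^n,y^m,s_0)$. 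Your fallback of ``noting once and for all that the causal conditioning machinery is designed precisely so this manipulation is legal'' is exactly the assertion that needs proof --- the whole reason $P(y^n\|x^n)$ is a separate notation from $P(y^n|x^n)$ is that they differ under feedback --- and the induction on $n-m$ is not spelled out: the inductive step is not simply ``remove one factor'', since each factor $P(y_i|y^{i-1},x^i,s_m,s_0)$ carries an $s_m$-dependence through the running belief $P(s_{i-1}|y^{i-1},x^{i-1},s_m,s_0)$, so the sum over $s_m$ does not pull through one factor at a time.

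The paper closes the gap by a different device. It introduces an auxiliary deterministic vector $\tilde{S}^n$ with $\tilde{S}_{m+1}=S_m$ and $\tilde{S}_i=0$ for $i\neq m+1$, so that the left-hand side of \eqref{eq: marg} is exactly $\sum_{\tilde{s}_{m+1}^n} P(y_{m+1}^n,\tilde{s}_{m+1}^n\|x_{m+1}^n\mid x^m,y^m,s_0)$. It then factorizes the ordinary joint $P(x_{m+1}^n,y_{m+1}^n,\tilde{s}_{m+1}^n\mid x^m,y^m,s_0)$ into a causally conditioned ``input'' piece times a causally conditioned ``output/state'' piece, uses the Markov chain $X_i-(X^{i-1},Y^{i-1},S_0)-S_m$ (valid for any causal input law) to drop the $\tilde{s}_{m+1}^{n-1}$-dependence from the input piece, and compares with the direct factorization $P(x_{m+1}^n,y_{m+1}^n|\cdot)=P(x_{m+1}^n\|y_{m+1}^{n-1}|\cdot)P(y_{m+1}^n\|x_{m+1}^n|\cdot)$. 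This yields precisely the marginalization identity you want without ever equating a causal conditioning with an ordinary conditional on $x^n$. If you want to salvage your direct argument, you would need to make explicit that all quantities in \eqref{eq: marg} depend only on the channel kernel $P_{S^+,Y|X,S}$, then verify the identity under a non-feedback (e.g.\ deterministic) input where the conflation is harmless --- but some such argument is needed, and the paper's $\tilde{S}^n$ construction is the clean way to supply it.
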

\begin{proof}[Proof of Lemma \ref{lemma: marg}]
    We begin with the following derivation:
    \begin{align}\label{eq: marg_step1}
        P(x_{m+1}^n,y_{m+1}^n|x^m,y^m,s_0)&=\sum_{\tilde{s}_{m+1}^n}P(x_{m+1}^n,y_{m+1}^n,\tilde{s}_{m+1}^n|x^m,y^m,s_0)\nn\\
        &=\sum_{\tilde{s}_{m+1}^n}P(x_{m+1}^n\|y_{m+1}^{n-1},\tilde{s}_{m+1}^{n-1}|x^m,y^m,s_0)P(y_{m+1}^n,\tilde{s}_{m+1}^n\|x_{m+1}^n|x^m,y^m,s_0)\nn\\
        &\stackrel{(a)}=\sum_{\tilde{s}_{m+1}^n}P(x_{m+1}^n\|y_{m+1}^{n-1}|x^m,y^m,s_0)P(y_{m+1}^n,\tilde{s}_{m+1}^n\|x_{m+1}^n|x^m,y^m,s_0)\nn\\
        &=P(x_{m+1}^n\|y_{m+1}^{n-1}|x^m,y^m,s_0)\sum_{\tilde{s}_{m+1}^n}P(y_{m+1}^n,\tilde{s}_{m+1}^n\|x_{m+1}^n|x^m,y^m,s_0),
    \end{align}
    where $(a)$ follows from the Markov chain $X_i-(X^{i-1},Y^{i-1},S_0)-S_{m}$, which holds for any $i\geq m+1$. On the other hand, we have
    \begin{align}\label{eq: marg_step2}
        P(x_{m+1}^n,y_{m+1}^n|x^m,y^m,s_0) = P(x_{m+1}^n\|y_{m+1}^{n-1}|x^m,y^m,s_0)P(y_{m+1}^n\|x_{m+1}^n|x^m,y^m,s_0).
    \end{align}
By comparing \eqref{eq: marg_step1} and \eqref{eq: marg_step2}, we deduce that 
\begin{align}
\sum_{\tilde{s}_{m+1}^n}P(y_{m+1}^n,\tilde{s}_{m+1}^n\|x_{m+1}^n|x^m,y^m,s_0)=P(y_{m+1}^n\|x_{m+1}^n|x^m,y^m,s_0).
\end{align}
The proof is thereby concluded, noting that the left-hand side of Eq. \eqref{eq: marg} is equal to $\sum_{\tilde{s}_{m+1}^n}P(y_{m+1}^n,\tilde{s}_{m+1}^n\|x_{m+1}^n|x^m,y^m,s_0)$ due to the unique construction of $\tilde{s}^n$.
\end{proof}
\begin{proof}[Proof of Lemma \ref{lemma: conditional_diff}]
We proceed with the proof of Lemma \ref{lemma: conditional_diff}. For convenience, we omit the dependence on the initial pair $(s_0,q_0)$, and bound the difference in \eqref{eq: diff_kl} by $\log_2(\mathcal{|S|})$ as follows:
\begin{align}
    &\Big{|}\infdiv[\Big]{P_{Y_{m+1}^n\|x_{m+1}^n|x^m,y^m,S_m}}{T_{Y_{m+1}^n|y^m}\Big{|}P_{S_m|x^{m+1},y^m}}-\infdiv[\Big]{P_{Y_{m+1}^n\|x_{m+1}^n|x^m,y^m}}{T_{Y_{m+1}^n|y^m}}\Big{|}\nn\\
    &\stackrel{(a)}\leq \Bigg{|}\sum_{s_m}P(s_m|x^{m+1},y^m)\sum_{y_{m+1}^n}P(y_{m+1}^n\|x_{m+1}^n|x^m,y^m,s_m)\log_2(P(y_{m+1}^n\|x_{m+1}^n|x^m,y^m,s_m))
    \nn\\&\quad-\sum_{y_{m+1}^n}P(y_{m+1}^n\|x_{m+1}^n|x^m,y^m)\log_2(P(y_{m+1}^n\|x_{m+1}^n|x^m,y^m))\Bigg{|}\nn\\
    &\quad +\Bigg{|}\sum_{s_m}P(s_m|x^{m+1},y^m)\sum_{y_{m+1}^n}P(y_{m+1}^n\|x_{m+1}^n|x^m,y^m,s_m)\log_2(T(y_{m+1}^n|y^m))
    \nn\\&\quad-\sum_{y_{m+1}^n}P(y_{m+1}^n\|x_{m+1}^n|x^m,y^m)\log_2(T(y_{m+1}^n|y^m))\Bigg{|}\nn\\
    &\stackrel{(b)}=\Bigg{|}\sum_{s_m}P(s_m|x^{m+1},y^m)\sum_{y_{m+1}^n}P(y_{m+1}^n\|x_{m+1}^n|x^m,y^m,s_m)\log_2(P(y_{m+1}^n\|x_{m+1}^n|x^m,y^m,s_m))
    \nn\\&\quad-\sum_{y_{m+1}^n}P(y_{m+1}^n\|x_{m+1}^n|x^m,y^m)\log_2(P(y_{m+1}^n\|x_{m+1}^n|x^m,y^m))\Bigg{|}\nn\\
    &\stackrel{(c)}= \Bigg{|}\sum_{\tilde{s}_{m+1}^n,y_{m+1}^n}    P(\tilde{s}_{m+1}^n,y_{m+1}^n\|x_{m+1}^n|x^m,y^m)\log_2(P(y_{m+1}^n\|x_{m+1}^n|x^m,y^m,\tilde{s}_{m+1}^n))
    \nn\\&\quad
    -\sum_{y_{m+1}^n}P(y_{m+1}^n\|x_{m+1}^n|x^m,y^m)\log_2(P(y_{m+1}^n\|x_{m+1}^n|x^m,y^m))\Bigg{|}\nn\\
    &\triangleq \Big{|}H(Y_{m+1}^n\|x_{m+1}^n|x^m,y^m) - H(Y_{m+1}^n\|x_{m+1}^n|\tilde{S}_{m+1}^n,x^m,y^m)\Big{|}\nn\\    
    &\triangleq I(\tilde{S}_{m+1}^n; Y_{m+1}^n\|x_{m+1}^n|x^m,y^m)\nn\\
    &\leq H(\tilde{S}_{m+1}^n\|x_{m+1}^n|x^m,y^m)
    \nn\\
    &\leq
    \log_2(|\mathcal{S}|), 
\end{align}
where $(a)$ follows by explicitly extracting the relative entropies and then applying the triangle inequality, $(b)$ follows since $\sum_{s_m}P(s_m|x^{m+1},y^m,s_0)P(y_{m+1}^n\|x_{m+1}^n|x^m,y^m,s_m,s_0)=P(y_{m+1}^n\|x_{m+1}^n|x^m,y^m,s_0)$ (see Lemma \ref{lemma: marg}), which implies that the second absolute value expression is equal to zero, and $(c)$ follows directly by the fact that $\tilde{S}_{m+1}^n\triangleq\{S_m,0,\dots,0\}$.
\end{proof}

\section{MDP Formulation of the Dual Upper Bound (Theorem \ref{theorem: formulation})} \label{app: formulation}
In this section, we prove Theorem \ref{theorem: formulation}, which concerns the representation of the upper bound in Theorem \ref{Th: DUB_Q} as an MDP. The proof consists of four technical parts, summarized in Lemma \ref{lemma: DP1}. The last part of the lemma relates the average reward induced by the MDP to the upper bound in Theorem \ref{Th: DUB_Q}.

\begin{lemma}\label{lemma: DP1}
\begin{enumerate}
    \item The reward is a time-invariant function of $z_{t-1}$ and $x_t$.
    \item The MDP state, $z_t$, is a time-invariant function of $z_{t-1}$, $y_t$, and $x_t$.
    \item Given $z_{t-1}$ and $x_t$, the disturbance, $y_t$, is conditionally independent of the past.
    \item The limit and the maximization in the upper bound in Theorem \ref{Th: DUB_Q} can be exchanged, i.e.,
    \begin{align} 
        &\lim_{n\to\infty}\min_{s_0,q_0}\max_{f(x^n\|y^{n-1})} c(x^n,s_0,q_0)=\sup\liminf_{n\to\infty}\min_{s_0,q_0}c(x^n,s_0,q_0),\nn
    \end{align}
    where the supremum is over all deterministic functions $\{f_i:\mathcal{X}^{i-1}\times\mathcal{Y}^{i-1}\rightarrow \mathcal{X}\}_{i\ge 1}$, and $c(x^n,s_0,q_0)$ is defined in \eqref{c_quant}.
\end{enumerate}
Accordingly, we can conclude from the last item that $\mathsf{C_{fb}}\le \rho^*$.

\end{lemma}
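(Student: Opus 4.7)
The plan is to verify items (1)--(3) as direct structural consequences of the definitions of $\beta_{t-1}$, the $Q$-graph map $\phi$, and the FSC kernel, and to concentrate the work on item (4), which links the multi-letter bound of Theorem~\ref{Th: DUB_Q} to the infinite-horizon average reward $\rho^*$ in \eqref{eq: optimal_reward_1}.

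For item (1), the reward in \eqref{reward_1} is a divergence in which only $\beta_{t-1}$, $q_{t-1}$, $x_t$, and the fixed kernels appear, so no time index enters and one can write $g(z_{t-1},x_t)$. For item (2), the $Q$-coordinate evolves via the time-invariant graph rule $q_t = \phi(q_{t-1},y_t)$, and for the belief coordinate I would apply Bayes' rule in two steps: first compute the posterior $P(S_{t-1}{=}s \mid x^t,y^t,s_0) \propto \beta_{t-1}(s)\,P_{Y|X,S}(y_t \mid x_t,s)$ (using that $x_t$ is causally determined from $(x^{t-1},y^{t-1},s_0)$ together with the FSC output law), and then push it forward by $P_{S^+|X,Y,S}(\cdot \mid x_t,y_t,s)$ to obtain $\beta_t$. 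Both steps depend on the past only through $(\beta_{t-1},x_t,y_t)$, so $z_t = F(z_{t-1},x_t,y_t)$ for a time-invariant $F$. Item (3) follows from the same marginalization: $P(y_t \mid z_{t-1},x_t,x^{t-1},y^{t-1},s_0) = \sum_s \beta_{t-1}(s)\,P_{Y|X,S}(y_t \mid x_t,s)$, which depends on the past only through $(z_{t-1},x_t)$ and thus specifies the disturbance law $P_w(\cdot \mid z,u)$.

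The main obstacle is item (4). The direction ``$\ge$'' is immediate: for any admissible policy, restricting to the first $n$ steps gives $\min_{s_0,q_0} c(x^n,s_0,q_0) \le \underline{\mathsf{C}}_n$, so $\liminf_n \min_{s_0,q_0} c(x^n,s_0,q_0) \le \lim_n \underline{\mathsf{C}}_n$, and taking the supremum over policies preserves the bound. For the reverse direction my plan is to exploit the super-additivity derived in Appendix~\ref{app:subsec_limexists}, namely $(m{+}k)\underline{\mathsf{C}}_{m+k} \ge m\underline{\mathsf{C}}_m + k\underline{\mathsf{C}}_k - \log_2|\mathcal{S}|$. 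Given $\epsilon > 0$, I would fix a block length $n_0$ with $\underline{\mathsf{C}}_{n_0} - \log_2|\mathcal{S}|/n_0 \ge \lim_n \underline{\mathsf{C}}_n - \epsilon$, take a maximizer $f^{\ast,n_0}$ of $\underline{\mathsf{C}}_{n_0}$, and build a block-periodic policy $\pi^{\ast}$ that re-applies $f^{\ast,n_0}$ on each block of length $n_0$, re-initialized from the belief state at the block boundary. Using the chain-rule decomposition of Appendix~\ref{app: DUB_Q} together with Lemma~\ref{lemma: conditional_diff} to absorb the $\log_2|\mathcal{S}|$ cost of conditioning on the intermediate state at each boundary, one obtains $\min_{s_0,q_0} c(x^{k n_0},s_0,q_0) \ge \underline{\mathsf{C}}_{n_0} - \log_2|\mathcal{S}|/n_0$ for every $k$; the residual from lengths that are not multiples of $n_0$ is $O(1/n)$ and is absorbed in the $\liminf$. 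Sending $\epsilon \to 0$ completes the direction.

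Combining items (1)--(3) establishes that $(z_t,x_t,y_t)$ form a valid MDP whose induced infinite-horizon average reward is $\rho^\ast$, and item (4) identifies $\rho^\ast$ with the upper bound of Theorem~\ref{Th: DUB_Q}, which in turn upper bounds $\mathsf{C_{fb}}$; hence $\mathsf{C_{fb}} \le \rho^\ast$. The block-reset step in item (4) is the part I expect to be the most delicate, since the true channel history encodes state information that the periodic restart discards; Lemma~\ref{lemma: conditional_diff} is precisely what limits the resulting penalty to $\log_2|\mathcal{S}|$ per block, so that the per-step cost $\log_2|\mathcal{S}|/n_0$ vanishes as $n_0$ grows.
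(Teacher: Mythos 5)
Your proposal is correct and follows essentially the same route as the paper: items (1)--(3) are verified via the same structural Bayes/marginalization identities, and item (4) is handled by combining the super-additivity and Fekete characterization from Appendix~\ref{app:subsec_limexists}, the conditioning penalty of Lemma~\ref{lemma: conditional_diff}, and a block-periodic concatenation of a near-optimal finite-horizon policy. The only minor cosmetic difference is in the bookkeeping of item (4): your ``$\ge$'' direction uses the plain existence of $\lim_n \underline{\mathsf{C}}_n$ rather than the $\sup_n[\underline{\mathsf{C}}_n-\log_2|\mathcal{S}|/n]$ form, and your reverse direction tracks the per-block $\log_2|\mathcal{S}|$ penalty explicitly through the Fekete characterization, whereas the paper absorbs it into the $\epsilon$-slack from $k>N(\epsilon)$ --- both variants are valid.
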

\begin{proof}[Proof of Lemma \ref{lemma: DP1}]
\begin{enumerate}
    \item The reward function in \eqref{reward_1} is defined as
    \begin{align*}
	    g(z_{t-1},x_t) \triangleq D\left(\sum_{s_{t-1}}\beta_{t-1}(s_{t-1}) P_{Y|X,S}(\cdot|x_t,s_{t-1})\middle\|T_{Y|Q}(\cdot|q_{t-1})\right).
    \end{align*}
    Clearly, for a fixed FSC and test distribution, the reward is a time-invariant function of the previous MDP state $z_{t-1}=(\beta_{t-1},q_{t-1})$ and the action $x_t$. 
    \item Let us first express $\beta_t$ as a function of $(\beta_{t-1},x_t,y_t)$. For any $s_t\in\mathcal{S}$, we have
    \begin{align} \label{Next_FSC}
	    \beta_{t}(s_t) &= P(s_t|x^t,y^t,s_0) \nn
        	\\&= \frac{P(y_t,s_t|x^{t},y^{t-1},s_0)}{P(y_t|x^{t},y^{t-1},s_0)} \nn
			\\&= \frac{\sum_{s_{t-1}}P(s_{t-1},y_t,s_t|x^{t},y^{t-1},s_0)}{\sum_{\tilde{s}_{t-1}}P(\tilde{s}_{t-1},y_t|x^{t},y^{t-1},s_0)} \nn
            \\&\stackrel{(a)}= \frac{\sum_{s_{t-1}}P(s_{t-1}|x^{t-1},y^{t-1},s_0)P(y_t|s_{t-1},x_t)P(s_t|x_t,y_t,s_{t-1})}{\sum_{\tilde{s}_{t-1}}P(\tilde{s}_{t-1}|x^{t-1},y^{t-1},s_0)P(y_t|\tilde{s}_{t-1},x_t)} \nn
            \\&= \frac{\sum_{s_{t-1}}\beta_{t-1}(s_{t-1})P(y_t|s_{t-1},x_t)P(s_t|x_t,y_t,s_{t-1})}{\sum_{\tilde{s}_{t-1}}\beta_{t-1}(\tilde{s}_{t-1})P(y_t|\tilde{s}_{t-1},x_t)},
    \end{align}
    where $(a)$ follows from the Markov chain $S_{t-1}-(X^{t-1},Y^{t-1},S_0)-X_t$ and the channel law.
    Accordingly, $\beta_t$ is a deterministic function of the previous MDP state, $z_{t-1}$, the action $x_t$, and the disturbance $y_t$. Further, since $q_t = \phi(q_{t-1},y_t)$, then there exists a time-invariant function $F(\cdot)$ such that $z_t = F(z_{t-1},x_t,y_t)$.
    
    \item In this item, we show that $P(y_t|z^{t-1},y^{t-1},x^t) = P(y_t|z_{t-1},x_t)$. In particular, 
    \begin{align}
        P(y_t|z^{t-1},y^{t-1},x^t) 
        &= \sum_{s_{t-1}}P(s_{t-1},y_t|z^{t-1},y^{t-1},x^t) \nn\\
        &\stackrel{(a)}=\sum_{s_{t-1}}P(s_{t-1}|z^{t-1},y^{t-1},x^t)P(y_t|x_t,s_{t-1},z_{t-1}) \nn\\ 
        &\stackrel{(b)}=\sum_{s_{t-1}}P(s_{t-1}|z_{t-1},x_t)P(y_t|x_t,s_{t-1},z_{t-1}) \nn\\ 
        &= \sum_{s_{t-1}}P(s_{t-1},y_t|z_{t-1},x_t) \nn\\
        &= P(y_t|z_{t-1},x_t),
    \end{align}
    where $(a)$ follows from applying the chain rule and the Markov property induced by the channel law, and $(b)$ follows since $z_{t-1}$ consists of $\beta_{t-1} = P(S_{t-1}|x^{t-1}, y^{t-1})$.
    
    \item The proof here is similar to the proof of Lemma $6$ in \cite{Huleihel_Sabag_DB}, which is based on the super-additivity property of the sequence $n\left(\underline{\mathsf{C}}_n-\frac{\log_2(|\mathcal{S}|)}{n}\right)$. However, for completeness, as it is not a special case of this lemma, we provide here the proof. The equality will be shown by proving the corresponding two inequalities. The first inequality can be shown as follows:
    \begin{align} 
        \lim_{n\to\infty}\underline{\mathsf{C}}_n 
        &= \lim_{n\to\infty}\min_{s_0,q_0}\max_{f(x^n\|y^{n-1})}c(x^n,s_0,q_0)\nn\\
        &= \lim_{n\to\infty}\max_{f(x^n\|y^{n-1},s_0)}\min_{s_0,q_0}c(x^n,s_0,q_0)\nn\\
        &\stackrel{(a)}=\sup_n\max_{f(x^n\|y^{n-1},s_0)}\min_{s_0,q_0}\left[c(x^n,s_0,q_0)-\frac{\log_2|\mathcal{S}|}{n}\right]\nn\\
        &=\sup\sup_n\min_{s_0,q_0}\left[c(x^n,s_0,q_0)-\frac{\log_2|\mathcal{S}|}{n}\right]\nn\\
        &\ge\sup\liminf_{n\to\infty}\min_{s_0,q_0}\left[c(x^n,s_0,q_0)-\frac{\log_2|\mathcal{S}|}{n}\right]\nn\\
        &=\sup\liminf_{n\to\infty}\min_{s_0,q_0}c(x^n,s_0,q_0),
    \end{align}
    where $(a)$ follows by Fekete's lemma, and the supremum in the last three steps is over all deterministic functions $\{f_i:\mathcal{X}^{i-1}\times\mathcal{Y}^{i-1}\times\mathcal{S}\rightarrow \mathcal{X}\}_{i\ge 1}$.

    We now show the reverse inequality. Using the notation and the main result from Appendix~\ref{app:subsec_limexists}, the existence of $\lim\limits_{n\to\infty} \underline{\mathsf{C}}_n$ implies that, for any $\epsilon>0$, there exists an $N(\epsilon)$ such that for all $k>N(\epsilon)$
    \begin{align} \label{eq: n_eps}
        \underline{\mathsf{C}}_k \geq \lim_{n\to\infty} \underline{\mathsf{C}}_n -\epsilon.
    \end{align}

    Fix $k>N(\epsilon)$, and let $\hat{f}(x^k\|y^{k-1})$ be the input mapping that achieves the maximum for $\underline{\mathsf{C}}_k$. Let us construct 
    \begin{align}
        \tilde{f}(x^n\|y^{n-1}) = \hat{f}(x^k\|y^{k-1})\hat{f}(x_{k+1}^{2k}\|y_{k+1}^{2k-1})\hat{f}(x_{2k+1}^{3k}\|y_{2k+1}^{3k-1})\cdots.
    \end{align}
    Consider now the following chain of inequalities
    \begin{align}
    &\sup\liminf_{n\to\infty}\min_{s_0,q_0}c(x^n,s_0,q_0)\nn\\ 
    &= \sup\liminf_{n\to\infty}\min_{s_0,q_0}\frac{1}{n} D\left(P_{Y^n\|X^n=x^n,S_0=s_0}\|T_{Y^n|Q_0=q_0}\right)\nn\\
    &\stackrel{(a)}\ge \liminf_{n\to\infty}\min_{s_0,q_0}\frac{1}{n} D\left(P_{Y^n\|X^n=\tilde{x}^n,S_0=s_0}\|T_{Y^n|Q_0=q_0}\right)\nn\\
    &\stackrel{(b)}= \liminf_{n\to\infty}\min_{s_0,q_0}\frac{1}{n} \sum_{i=0}^{\left\lfloor \frac{n}{k} \right\rfloor - 1}D\left(P_{Y_{ik+1}^{(i+1)k}\|\tilde{x}_{ik+1}^{(i+1)k}|Y^{ik},\tilde{x}^{ik},s_0}\Big{\|}T_{Y_{ik+1}^{(i+1)k}|Y^{ik},q_0}\Big{|}P_{Y^{ik}\|\tilde{x}^{ik},s_0}\right)\nn\\
    &\stackrel{(c)}\ge \liminf_{n\to\infty}\frac{1}{n} \sum_{i=0}^{\left\lfloor \frac{n}{k} \right\rfloor - 1}\min_{s_{ik},q_{ik}}D\left(P_{Y_{ik+1}^{(i+1)k}\|\tilde{x}_{ik+1}^{(i+1)k},s_{ik}}\Big{\|}T_{Y_{ik+1}^{(i+1)k}|q_{ik}}\right)\nn\\
    &\stackrel{(d)}\ge \liminf_{n\to\infty}\min_{s_0,q_0}\frac{k}{n}\left\lfloor\frac{n}{k}\right\rfloor\left[\frac{1}{k}\cdot D\left(P_{Y^{k}\|\tilde{x}^k,s_0}\Big{\|}T_{Y^{k}\|q_0}\right)\right]\nn\\    
    &\stackrel{(e)}\ge 
        \lim_{n\to\infty}\underline{\mathsf{C}}_n-\epsilon,
    \end{align}
    where $(a)$ follows by considering the inputs that follow the mapping $\tilde{f}(x^n\|y^{n-1})$, which is not necessarily the mapping that achieves the maximum, $(b)$ follows from the chain rule and the fact that $k$ is fixed and the divergence is bounded (otherwise the bound is meaningless), and therefore, when rounding $n$ to $k\lfloor n/k\rfloor$ the residual goes to zero, $(c)$ holds by following the derivation in \eqref{eq: limit_step2} for each relative entropy term in the sum, $(d)$ follows since each term in the sum is identical due to the construction of the input distribution, and, finally, $(e)$ follows from \eqref{eq: n_eps}.
\end{enumerate}  
\end{proof}

\section{NOST Channel --- Lower Bound} \label{app: NOST_LB}
\begin{proof}
In this section, we derive a lower bound on the capacity of the NOST channel to demonstrate the tightness of our upper bound and conclude the proof of Theorem \ref{th: NOST}. We establish the lower bound by employing the $Q$-graph lower bound derived in \cite{OronBasharfeedback}, which holds for unifilar FSCs. To introduce this lower bound, we first define a property termed \textit{BCJR-invariant} input distribution. An input distribution is considered BCJR-invariant if it satisfies the Markov chain $S^+-Q^+-(Q,Y)$. A simple verification of this Markov chain is given by the following equation:
\begin{align}\label{eq: BCJR}
    \pi_{S|Q}(s^+|q^+) = \frac{\sum_{x,s}\pi_{S|Q}(s|q)P(x|s,q)P(y|x,s)\mathbbm{1}_{\{s^+=\tilde{f}(x,y,s)\}}}{\sum_{x',s'}\pi_{S|Q}(s'|q)P(x'|s',q)P(y|x',s')},
\end{align}
which needs to hold for all $(s^+, q, y)$ and $q^+ = \phi(q,y)$, where $\pi_{S,Q} = \pi_{Q}\pi_{S|Q}$ is the induced stationary distribution on the $(S,Q)$-graph\footnote{The $(S,Q)$-graph is a directed and connected graph that integrates both the $Q$-graph and the evolution of the channel states. For further details, refer to \cite{OronBasharfeedback}.}.

Having defined a BCJR-invariant input distribution, we proceed to introduce the $Q$-graph lower bound from \cite{OronBasharfeedback} through the following theorem.
 \begin{theorem}\cite[Theorem $3$]{Sabag_UB_IT}\label{theorem:bcjr_general}
If the initial state $s_0$ is available to both the encoder and the decoder, then the feedback capacity of a strongly connected unifilar FSC is bounded by
\begin{align}\label{eq:Theorem_Lower}
\mathsf{C_{fb}}&\geq I(X,S;Y|Q),
\end{align}
for all aperiodic inputs $P_{X|S,Q}$ that are BCJR-invariant, and for all $Q$-graphs with $q_0$ such that $(s_0,q_0)$ lies in an aperiodic closed communicating class.
\end{theorem}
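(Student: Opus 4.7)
The plan is to exhibit a feedback input strategy whose directed-information rate equals $I(X,S;Y|Q)$ evaluated at $\pi_{S,Q}\,P_{X|S,Q}\,P_{Y|X,S}$, and then to invoke Theorem~\ref{FSC_feedback_Capacity}. Since the channel is unifilar and $s_0$ is common knowledge, both encoder and decoder can track $(S_{t-1},Q_{t-1})$ from the feedback history: $S_t=\tilde f(X_t,Y_t,S_{t-1})$ is a deterministic recursion with known initial condition, and $Q_{t-1}=\Phi_{q_0}(Y^{t-1})$ is a deterministic function of past outputs. I will therefore draw $X_t\sim P_{X|S,Q}(\cdot\,|\,S_{t-1},Q_{t-1})$, which is a legitimate causally conditioned input $P(x^n\|y^{n-1})$.

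Under this choice, $(S_t,Q_t,X_t,Y_t)$ is a time-homogeneous Markov chain whose one-step kernel, projected onto $(S,Q)$, is exactly the update appearing on the right-hand side of \eqref{eq: BCJR}; hence $\pi_{S,Q}$ is stationary, and the aperiodic-closed-class hypothesis upgrades this to geometric convergence. The conditional directed information then decomposes, using the unifilar property (so $S_{t-1}$ is a deterministic function of $(X^{t-1},Y^{t-1},s_0)$) together with the channel Markov property, as
\[\tfrac{1}{n}I(X^n\to Y^n\,|\,s_0)=\tfrac{1}{n}\sum_{t=1}^n H(Y_t\,|\,Y^{t-1},s_0)-\tfrac{1}{n}\sum_{t=1}^n H(Y_t\,|\,X_t,S_{t-1}),\]
and the ergodic theorem immediately sends the second Cesàro average to $H(Y|X,S)$ computed under $\pi_{S,Q}P_{X|S,Q}P_{Y|X,S}$.

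The decisive observation is that \eqref{eq: BCJR} is nothing other than the fixed-point equation for one step of the BCJR filter $\beta_{t-1}(s)\triangleq P(S_{t-1}{=}s\,|\,Y^{t-1},s_0)$ under the chosen input law. Consequently, whenever $\beta_{t-1}(\cdot)=\pi_{S|Q}(\cdot\,|\,Q_{t-1})$, the update forces $\beta_t(\cdot)=\pi_{S|Q}(\cdot\,|\,Q_t)$. Once this invariant identification is established asymptotically, the predictive distribution collapses to $P(Y_t{=}y\,|\,Y^{t-1},s_0)=\sum_{s,x}\pi_{S|Q}(s|Q_{t-1})P_{X|S,Q}(x|s,Q_{t-1})P_{Y|X,S}(y|x,s)=P_\pi(Y{=}y\,|\,Q_{t-1})$, so that $H(Y_t\,|\,Y^{t-1},s_0)\to H(Y|Q)$ and the first Cesàro average tends to $H(Y|Q)$. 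Combining the two limits yields $\tfrac{1}{n}I(X^n\to Y^n\,|\,s_0)\to H(Y|Q)-H(Y|X,S)=I(X,S;Y|Q)$; since $s_0$ is common knowledge, this rate is achievable, and Theorem~\ref{FSC_feedback_Capacity} gives $\mathsf{C_{fb}}\ge I(X,S;Y|Q)$.

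The main obstacle is the filter-stability claim underlying the last step: rigorously justifying that the posterior $\beta_{t-1}$ forgets its initial concentration at $s_0$ and merges with $\pi_{S|Q}(\cdot\,|\,Q_{t-1})$. A clean circumvention is a stationary-extension argument---seed the analysis from a time $t_0$ large enough that $(S_{t_0},Q_{t_0})$ is within total-variation distance $\varepsilon$ of $\pi_{S,Q}$, absorb the transient on $[0,t_0]$ into an $O(t_0/n)$ loss that vanishes in the normalized limit, and bound the residual entropy discrepancy by a Lipschitz-in-TV estimate exploiting the finiteness of $\cY$. With this reduction, only the single one-step invariance \eqref{eq: BCJR} is actually invoked, and the argument closes.
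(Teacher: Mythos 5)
The paper does not prove this statement: it is imported verbatim as Theorem~3 of \cite{Sabag_UB_IT} and used as a black box in Appendix~\ref{app: NOST_LB}, so there is no in-paper argument to compare against; your proposal must stand on its own.

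Your overall route is the natural one: run the stationary graph-based encoder $X_t\sim P_{X|S,Q}(\cdot\,|\,S_{t-1},Q_{t-1})$, decompose the per-letter directed information as $H(Y_t|Y^{t-1},s_0)-H(Y_t|X_t,S_{t-1})$, dispatch the second term by ergodicity of the finite aperiodic chain $(S_t,Q_t)$, and identify the first term with $H(Y|Q)$ via BCJR-invariance. The second-term limit is fine. But the first term is where your argument has a genuine gap, and the remedy you propose does not close it. The BCJR condition \eqref{eq: BCJR} only says that $\pi_{S|Q}(\cdot|q)$ is a \emph{fixed point} of the one-step Bayes update; it does not, by itself, deliver filter stability, i.e.\ that the decoder's posterior $\beta_t(\cdot)=P(S_t=\cdot\,|\,Y^t,s_0)$, initialized at the point mass $\delta_{s_0}$ rather than at $\pi_{S|Q}(\cdot|q_0)$, merges with $\pi_{S|Q}(\cdot|Q_t)$. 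Your ``stationary-extension'' fix conflates two different objects: geometric mixing of the aperiodic chain controls the \emph{unconditional} law $P_{(S_{t_0},Q_{t_0})}$, whereas the quantity entering $H(Y_t|Y^{t-1},s_0)$ is the \emph{conditional} law $\beta_{t_0}$, a random posterior that depends on the realized output path. Mixing of the marginal gives no control over that posterior; to justify the step you would need a genuine filter-stability estimate (a Dobrushin-type contraction of the Bayes operator, or an observability/minorization hypothesis), and neither follows from the fixed-point identity \eqref{eq: BCJR} plus aperiodicity of $(S,Q)$.

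A way to close the gap that avoids filter stability entirely is to randomize the initialization rather than argue forgetting: draw $\tilde S_0\sim\pi_{S|Q}(\cdot|q_0)$, reveal it only to the encoder, and run the same policy. Then $P(\tilde S_t=\cdot\,|\,\tilde Y^t,q_0)=\pi_{S|Q}(\cdot|\tilde Q_t)$ holds for every $t$ by a one-line induction from \eqref{eq: BCJR} --- no convergence is needed --- so $\tfrac{1}{n}I(\tilde X^n\to\tilde Y^n\,|\,q_0)\to I(X,S;Y|Q)$ exactly, by ergodicity of $(S_t,Q_t)$ restricted to the aperiodic closed class containing $(s_0,q_0)$. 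Passing between this randomized start and the deterministic $s_0$ costs at most $\log|\mathcal{S}|$ in directed information (the same kind of $\log|\mathcal{S}|$ slack used in Lemma~\ref{lemma: conditional_diff}), which vanishes after dividing by $n$, and strong connectivity makes the achievable rate independent of the particular $s_0$. This also explains why the theorem's hypothesis is phrased as a condition on the $(S,Q)$-graph (existence and support of $\pi_{S,Q}$) rather than as any filter-convergence condition.
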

Hereafter, we refer to a pair comprising a $Q$-graph and an input distribution $P_{X|S,Q}$ that satisfies the BCJR-invariant property as a \emph{graph-based encoder}.

It is essential to note that the NOST channel, being a finite-memory state channel, does not meet the unifilar channel requirement for the lower bound. Consequently, we reformulate the channel to a unifilar FSC following Section \ref{subsection: FC_extention}. Once reformulated, the transformed channel is defined with the following transition probability
\begin{align}\label{eq: trapdoor_input_dist}
    P_{Y|X,\tilde{S}}(0|x,\tilde{s}) = \scalebox{0.93}{\begin{tabularx}{0.35\textwidth} { 
  |@{}p{1.3cm}@{} 
  | >{\centering\arraybackslash}X  
  | >{\centering\arraybackslash}X | }
 \hline
          & $x=0$ & $x=1$\\
 \hline
$\;\;\tilde{s}=0$  & $1-0.5\epsilon$  & $0.5(1-\epsilon)$  \\
\hline
$\;\;\tilde{s}=1$  &  $0.5(1+\epsilon)$   & $0.5\epsilon$   \\
\hline
\end{tabularx}},
\end{align}
while the new channel state evolves according to $\tilde{s}_t=y_t$.

Using the transformed channel, we derive the lower bound employing a specific graph-based encoder. We demonstrate that the achievable rate resulting from this graph-based encoder precisely matches the upper bound derived in the preceding section. Also here, the $Q$-graph used in deriving the lower bound is the first-order Markov $Q$-graph in Fig. \ref{fig:1Markov}. For this $Q$-graph, we define the following input distribution, where $a$ is defined in Eq. \eqref{eq: opt_a}:
\begin{align}\label{eq: trapdoor_input_dist}
    P_{X|\tilde{S},Q}(x|\tilde{s},q) = \scalebox{0.93}{\begin{tabularx}{0.35\textwidth} { 
  |@{}p{1.3cm}@{} 
  | >{\centering\arraybackslash}X  
  | >{\centering\arraybackslash}X | }
 \hline
          & $x=0$ & $x=1$\\
 \hline
$\;\;q=1$  & $2a-\bar{\epsilon}$  & $2\bar{a}-\epsilon$  \\
\hline
$\;\;q=2$  &  $2\bar{a}-\epsilon$   & $2a-\bar{\epsilon}$   \\
\hline
\end{tabularx}},
\end{align}
for any $\tilde{s}\in \tilde{\mathcal{S}}$.

Next, we proceed to calculate the stationary distribution $\pi_{\tilde{S},Q}$, which is necessary for the computation of $I(X,\tilde{S};Y|Q)$. The required transition probability for this computation is given by:
\begin{align}
    P(\tilde{s}^+,q^+|\tilde{s},q)= \sum_{x,y}P(x|\tilde{s},q)P(y|x,\tilde{s})\mathbbm{1}_{\{q^+=\phi(q,y)\}}\mathbbm{1}_{{\{\tilde{s}^+=\tilde{f}(x,y,\tilde{s})\}}}.\nn
\end{align}
Consequently, standard computation of the stationary distribution yields:
\begin{align*}
   \pi_{\tilde{S}|Q}(\tilde{s}|q) = \scalebox{0.93}{\begin{tabularx}{0.3\textwidth} { 
  |@{}p{1.3cm}@{} 
  | >{\centering\arraybackslash}X 
  | >{\centering\arraybackslash}X | }
 \hline
& $\tilde{s}=0$ & $\tilde{s}=1$\\
 \hline
$\;\;q=1$  & $1$  & $0$ \\
\hline
$\;\;q=2$  & $0$  & $1$ \\
\hline
\end{tabularx}}.
\end{align*}

We now verify that the proposed graph-based encoder satisfies the BCJR-invariant property. Let us demonstrate this explicitly for the case where $(q,y)=(1,1)$ and $\tilde{s}^+=1$. Since $\phi(1,1)=2$, the left-hand side of Eq. \eqref{eq: BCJR} is equal to $\pi_{\tilde{S}|Q}(1|2)$, while the right-hand side is equal to:
\begin{align*}
    &\frac{\sum_{x,s}\pi_{\tilde{S}|Q}(s|1)P_{X|\tilde{S},Q}(x|s,1)P_{Y|X,\tilde{S}}(1|x,s)\mathbbm{1}_{\{1=\tilde{f}(x,1,s)\}}}{\sum_{x',s'}\pi_{\tilde{S}|Q}(s'|1)P_{\tilde{X}|\tilde{S},Q}(x'|s',1)P_{Y|X,\tilde{S}}(1|x',s')}\\
    &\stackrel{(a)}= 1,
\end{align*}
where $(a)$ holds because $\mathbbm{1}_{\{1=\tilde{f}(x,1,s)\}}=1$ for any $x,s$, implying the numerator is equal to the denominator. Accordingly, it is indeed equal to $\pi_{\tilde{S}|Q}(1|2)$, as required. The verification of for the other cases can be done in a similar manner.

Finally, the achievable rate of the graph-based encoder is
\begin{align}\label{eq:acheivable_rate}
    R &= I(X,\tilde{S};Y|Q) \nn\\
      &= \sum_{q\in\mathcal{Q}}\pi_Q(q)\cdot I(X,\tilde{S};Y|Q=q) \nn\\
      &= \sum_{q\in\mathcal{Q}}\pi_Q(q)\cdot\left[H_2\left(Y|Q=q\right)-H_2(Y|X,\tilde{S},Q=q)\right]\nn\\
      &\stackrel{(a)}= \sum_{q\in\mathcal{Q}}0.5\cdot\left[H_2\left(a\right)-H_2(Y|X,\tilde{S},Q=q)\right]\nn\\
      &= H_2(a)-\frac{2a-\bar{\epsilon}}{2}H_2\left(\frac{\epsilon}{2}\right)-\frac{2\bar{a}-\epsilon}{2}H_2\left(\frac{1+\epsilon}{2}\right)\nn\\
      &\stackrel{(b)}=\frac{1}{2}\log_2\left(\frac{1}{4}\left(\frac{\epsilon}{1-a}\right)^{\epsilon}\left(\frac{1+\bar{\epsilon}}{a}\right)^{1+\bar{\epsilon}}\right),
\end{align}
where $(a)$ follows due to the fact that
\begin{align}
    P_{Y|Q}(0|q) &= \sum_{x,\tilde{s}} \pi(\tilde{s}|q)P(x|\tilde{s},q)P_{Y|X,\tilde{S}}(0|x,\tilde{s}) \nn\\
    &= \begin{cases} a, & q=1, \\
                  1-a, & q=2,\nn
    \end{cases}
\end{align}
which implies that $\pi_Q(q) = [0.5,0.5]$ due to the symmetry, and $(b)$ follows by simplifying the expression. The proof of Theorem \ref{th: NOST} is concluded by noting that the induced achievable rate in \eqref{eq:acheivable_rate} is equal to the upper bound that we derived at the end of Section \ref{subsec:nost}.
\end{proof}

\section{N-Ising Channel --- Proof of Theorems \ref{Th: N-Ising} and \ref{Th: N-Ising_Q4}} \label{app: N-Ising}
\begin{proof}[Proof of Theorem \ref{Th: N-Ising}]
Consider the first-order Markov $Q$-graph depicted in Fig. \ref{fig:1Markov}. For some $\epsilon\in[0,0.5]$, define the following graph-based test distribution
\begin{align}
    T(y=0|\underline{q})=\left[a,1-a\right],
\end{align}
where
\begin{align*}
    a = \frac{\epsilon^{\epsilon\gamma}}{\epsilon^{\epsilon\gamma}+(\bar{\epsilon}^{\bar{\epsilon}}(1+\bar{\epsilon})^{\epsilon-2})^\gamma (1+\epsilon)^{\gamma(1+\epsilon)}},\;\; \gamma=\frac{1}{2\epsilon^2-3\epsilon+2}.
\end{align*}
Define the constant:
\begin{align}\label{eq: rho_N-Ising}
    \rho^*&=\frac{1}{2+4\bar{\epsilon}}\cdot\log_2\left(\frac{16^\epsilon\epsilon^{\epsilon\bar{\epsilon}}\left(\bar{\epsilon}(1+\bar{\epsilon})\right)^{\epsilon^2-3\epsilon+2}}{64\bar{a}^2(a\bar{a})^{2\bar{\epsilon}}(1+\epsilon)^{\epsilon^2-\epsilon-2}}\right).
\end{align}
Recall that the MDP state is defined as $z_t=(\beta_t,q_t)$, where
\begin{align*}
    \beta_{t}=P(S_t|x_t) = 
    \begin{cases} [1-\epsilon,\epsilon], & x_t=0, \\
                  [\epsilon,1-\epsilon], & x_t=1.
    \end{cases}
\end{align*}
Accordingly, define the following value function:
\begin{align}\label{eq: value_N-Ising}
    h([1-\epsilon,\epsilon],1)&=h([\epsilon,1-\epsilon],2)=0\nn\\
    h([1-\epsilon,\epsilon],2)&=h([\epsilon,1-\epsilon],1)=\frac{1}{2\epsilon-3}\cdot\log_2\left(\frac{\bar{a}\bar{\epsilon}(1+a)}{a(1+\bar{\epsilon})^2}\cdot \left(\frac{a^2(1+\epsilon)(1+\bar{\epsilon})}{\bar{a}^2\epsilon\bar{\epsilon}}\right)^\epsilon\right).
\end{align}

It can be shown that the constant $\rho^*$ in \eqref{eq: rho_N-Ising} and the value function in \eqref{eq: value_N-Ising} solve the Bellman equation under the policy defined below.
\begin{align}\label{eq: policy_N-Ising}
    u^*(\beta,q)=\begin{cases} 1, & q=1, \\
                  0, & q=2.
    \end{cases}
\end{align}
The verification is omitted as it involves the same technical steps performed in proving Theorem \ref{th: NOST}.
\end{proof}
\begin{proof}[Proof of Theorem \ref{Th: N-Ising_Q4}]
Consider a $Q$-graph consists of four nodes with an evolution function that is given by the vectors representation $\underline{\phi}(
\underline{q},y=0)=[2,1,1,1]$ and $\underline{\phi}(\underline{q},y=1)=[3,3,4,3]$. For some $(a,b)\in[0,1]^2$, define the following graph-based test distribution
\begin{align*}
    T(y=0|\underline{q})=\left[a,b,1-a,1-b\right].
\end{align*}
Define the constant:
\begin{align}\label{eq: rho_N-Ising_Q4}
    \rho^*&=\frac{1}{2(1+2\bar{\epsilon})(2+\epsilon)}\cdot\log_2\left(\frac{2^{4\epsilon^2+2\epsilon-12}\bar{a}^{4\epsilon^2-6\epsilon-4}(\epsilon^2-3\epsilon+2)^{\epsilon^3-\epsilon^2-4\epsilon+4}}{a^{4\bar{\epsilon}}b^{2\epsilon\bar{\epsilon}}(1+\epsilon)^{\epsilon^3+\epsilon^2-4\epsilon-4}\epsilon^{\epsilon^3+\epsilon^2-2\epsilon}\bar{b}^{2\epsilon^2-6\epsilon+4}}\right).
\end{align}
Further, define the following value function:
\begin{align}\label{eq: value_N-Ising_Q4}
    h([1-\epsilon,\epsilon],1)&=h([\epsilon,1-\epsilon],3)\nn\\&=\frac{1}{(1+2\bar{\epsilon})(2+\epsilon)}\cdot\log_2\left(\frac{(1+\epsilon)^{\epsilon^2+3\epsilon+2}(2-\epsilon)^{\epsilon^2-4}b^{2\epsilon^3-\epsilon^2-5\epsilon+6}}{a^{2-2\epsilon^2-\epsilon}\bar{b}^{2\epsilon^3-\epsilon^2-5\epsilon+2}\epsilon^{\epsilon^2+2\epsilon}\bar{a}^{2\epsilon^2+\epsilon+2}\bar{\epsilon}^{\epsilon^2+\epsilon-2}}\right)\nn\\
    h([1-\epsilon,\epsilon],2)&=h([\epsilon,1-\epsilon],4)=0\nn\\
    h([1-\epsilon,\epsilon],3)&=h([\epsilon,1-\epsilon],1)=\frac{1}{2+\epsilon}\cdot\log_2\left(\frac{\bar{a}^{\epsilon-2}\bar{b}^{\epsilon^2}}{a^{\epsilon}b^{\epsilon^2-2}}\right)\nn\\
    h([1-\epsilon,\epsilon],4)&=h([\epsilon,1-\epsilon],2)=\log_2\left(\frac{b\bar{b}^\epsilon}{\bar{b}b^\epsilon}\right).
\end{align}
Similarly here, the constant $\rho^*$ in \eqref{eq: rho_N-Ising_Q4} and the value function in \eqref{eq: value_N-Ising_Q4} satisfy the Bellman equation. The verification is omitted and follows the same technical steps.
\end{proof}
\end{appendices}
\bibliography{ref}
\bibliographystyle{IEEEtran}

\end{document}